\numberwithin{equation}{section}
\theoremstyle{definition}
\newtheorem{definition}{Definition}[section]
\theoremstyle{plain}
\newtheorem{theorem}{Theorem}[section]
\theoremstyle{plain}
\newtheorem{lemma}{Lemma}[section]
\theoremstyle{plain}
\newtheorem{proposition}{Proposition}[section]
\theoremstyle{remark}
\newtheorem*{remark}{Remark}
\title{Digitizing lattice gauge theories in the magnetic basis: reducing the breaking of the fundamental commutation relations}
\author{Simone Romiti, Carsten Urbach}
\date{Helmholtz-Institut für Strahlen- und Kernphysik\\%
  and Bethe Center for Theoretical Physics,\\
  University of Bonn, Bonn, Germany\\[2ex]%
  \today
}
\newcommand{\dummy}[1]{#1} 
\newcommand{\Vinv}{V^\dagger} 
\begin{document}
\maketitle
\begin{abstract}
We present a digitization scheme for the lattice $\mathrm{SU}(2)$ gauge
theory Hamiltonian in the \textit{magnetic basis}, where the gauge links are unitary
and diagonal.
The digitization is obtained from a particular partitioning of
the $\mathrm{SU}(2)$ group manifold, 
with the canonical momenta 
constructed by an approximation of the Lie derivatives
on this partitioning.
This construction, analogous to a discrete Fourier transform,
preserves the spectrum of the electric part of
the Hamiltonian and the canonical commutation relations exactly on a subspace of
the truncated Hilbert space, while the residual subspace can be
projected above the cutoff of the theory.
\end{abstract}

%\tableofcontents

\section{Introduction}\label{intro}

A quantum system evolves according to the Schrödinger equation~\footnote{
  Here and throughout the whole paper we will use natural units:
  $\hbar=c=1$.}: \begin{equation}
\label{eq:SchroedingerEquation}
i \frac{\partial}{\partial t} | \psi \rangle = H | \psi \rangle \, ,
\end{equation}
where $H$ is the Hamiltonian operator and
$\ket{\psi}$ lies in the Hilbert space $\mathcal{H}$ of physical
states.
Despite the fact that the solution to Schrödinger's equation is
formally given by the time evolution operator $U_t = \exp(-i H t)$,
finding a solution in practice when no analytical solution is
available is hampered by the so-called \textit{curse of
dimensionality}: the relevant Hilbert space grows in general
exponentially with the system size.

Quantum gauge theories represent such quantum systems for which in general no
analytical solution is known. They are attacked by discretizing space
and restrict to a finite volume, an approach known as lattice gauge
theory. The degrees of freedom are then given by a number of gauge
links per lattice site, with an infinite dimensional Hilbert space
corresponding to a single link $U$. The total Hilbert space is eventually
given by the tensor product of all these local spaces, growing
exponentially with the number of lattice sites.

Even if a digital quantum computer with $n$ qubits was available,
it would allow to implement a Hilbert space of size $2^n$ only, which
is always finite. Therefore, the infinite dimensional site-local
Hilbert space of a single gauge link must be truncated to a finite
dimensional one in order to map it to the space spanned by the
qubits. 
This is even more true if classical computers in combination for
instance with tensor network methods are to be used.
Such a truncation of $\mathcal{H}$ is not unique, and one needs to
check that the correct dynamics is restored when the vector space
becomes
infinite-dimensional~\cite{santhanam1976quantum,STOVICEK1984157}
while optimizing for efficiency.

For lattice gauge theories an important aspect in the truncation of
the Hilbert space is to preserve as much of the gauge symmetry and the
fundamental commutation relations as possible.

The relevant Hamiltonian $H$ for a non-Abelian $\mathrm{SU}(N_c)$ lattice
gauge theory~\cite{PhysRevD.11.395} is
constructed from gauge field operators $U$ and their canonical momenta
$L_a$ and $R_a$ ($a=1,\ldots, N_c^2-1$) at each point $x$ and direction $\mu$ of
the lattice. Local gauge symmetry dictates the form of $H$. While
deferring the exact definitions to Sec.~\eqref{sec-TheorBkg}, the
properties the $U$ and $L_a$, $R_a$ need to fulfil read:
\begin{enumerate}
\item
  canonical commutation relations: $[L_a, U] = -\tau_a U$ and $[R_a, U] = U \tau_a$ 
\item
  structure of the Lie algebra: $[L_a, L_b] = i f_{abc} L_c$ and $[R_a, R_b] = i f_{abc} R_c$,
\item
  special unitarity~\footnote{From now on the label ``special'' will
    always be implied.}: $U\cdot U^\dagger = U^\dagger\cdot U = 1$ and
  $\det(U) = 1$,
\item
  the $L_a$, $R_a$ are ultra-local, i.e.~the only non-vanishing components
  couple neighbor points in the discretized manifold.
\end{enumerate}

Here $\tau_a$ are the generators of $\mathrm{SU}(N_c)$ and $f_{abc}$ are
the $\mathrm{SU}(N_c)$ structure constants. Since $\mathrm{SU}(N_c)$ is parametrized
by continuous parameters, any implementation on a quantum device will,
as mentioned above, require a truncation of the Hilbert space. An
inevitable consequence of such a truncation is that some of the
properties 1.-4. cannot be implemented exactly.

This offers a challenge and an opportunity at the same time. One
question to ask is how to digitize the $L_a$ and $U$ operators such
that efficient practical simulations are feasible. Another question to
ask is which sub-set of the properties 1.-4. can be maximally preserved,
i.e.~an equivalent to the famous Nielsen-Ninomiya theorem for lattice
Dirac operators~\cite{Nielsen:1981hk}. One
could even speculate whether there is an exact lattice version of
local gauge invariance, just like there is an exact version of chiral
symmetry on the lattice~\cite{Ginsparg:1981bj,Luscher:1998pqa}.

There are multiple digitization prescriptions known, see for instance
Refs.~\cite{PhysRevD.101.114502,davoudi2022general,Liu:2021tef,Zache:2023dko,Bauer:2023jvw}. 
Most of them try to preserve the canonical commutation relations. For
instance, quantum link models~\cite{Chandrasekharan:1996ih,Wiese:2021djl}
enjoy exact properties 1., 2. and 4., while the $U$ operators are no
longer unitary but become parts of a larger group. The same is true when
a truncated Clebsch--Gordan expansion is used to represent the $U$
operators (see prop. 3 of Ref.~\cite{PhysRevD.91.054506}).

Only recently we have studied in Ref.~\cite{Jakobs:2023lpp} an approach where
properties 1. and 2. are only approximately fulfilled, while the gauge
field operators remain unitary and the $L_a$,$R_a$ ultra-local. This
approach is based on finite sub-sets of $\mathrm{SU}(N_c)$ elements and
corresponding discretizations of the canonical momenta. In this paper we
will show how, by a clever choice of $\mathrm{SU}(N_c)$ elements in the
sub-set and construction of the momenta, 
properties 2. and 3. are exact and property 1. is exact on a
subspace of the total truncated Hilbert space, but the $L_a$ are no
longer ultra-local.

We will work for the special case of a $\mathrm{SU}(2)$ lattice gauge theory,
which shares many properties with the $\mathrm{SU}(3)$ theory~\cite{PhysRevD.11.395,Sommer_1994},
while significantly simpler to simulate. We also discuss how to
generalize the arguments presented below to $\mathrm{SU}(3)$.

The structure of the paper is the following. In Sec.~\eqref{sec-MainResults} we summarize the main results of this paper,
postponing the details to the subsequent sections. Sec.~\eqref{sec-TheorBkg} recalls some theoretical background, while the
details of our results are discussed in the subsections of
Sec.~\eqref{sec-su2Theory}. Sec.~\eqref{sec-su2Preliminary} shows 
some general arguments on $\mathrm{SU}(2)$ sampling which are called back
later. Sec.~\eqref{sec-DJT} gives the definition and properties of what
we define as ``Discrete Jacobi Transform'', namely the orthogonal
transform based on the Jacobi polynomials that relates the electric and
magnetic bases. In Sec.~\eqref{sec-su2MatrixOperators} we give the
explicit representation of the gauge links and canonical momenta at
finite truncation, together with some discussion on the numerical
implementation and advantage of our digitization. 
App.~\eqref{sec-FiniteDifferences} presents a more expensive approach which
however easily generalizes to $\mathrm{SU}(3)$. 
Finally, in Sec.~\eqref{sec-conclusion} we draw our conclusion and give an outlook for
future developments.

In order to ease the reading we have also included some appendices, with
material that should be fairly standard to the already experts in the
field. In App.~\eqref{sec-SUNcTheoryReview} we recall the main
properties of the $\mathrm{SU}(N_c)$ lattice Hamiltonian degrees of freedom, and
in~\eqref{sec-GeometrysuNc} we discuss the differential geometry
interpretation of the canonical momenta. 
% In Sec.~\eqref{sec-u1Theory}
% we present a possible digitization for $U(1)$, which is preparatory
% and gives an intuition of $\mathrm{SU}(2)$.

\section{Main results}\label{sec-MainResults}

In this section we summarize the main results of this paper, which we
are going to prove in the following sections. As mentioned already in
the introduction, we have discussed in Ref.~\cite{Jakobs:2023lpp} how to define the
canonical momenta $L_a$, $R_a$ for the case of $\mathrm{SU}(2)$ gauge theories
formulated on a finite partitioning of $S_3$, which is isomorphic to
the group.

With a finite partitioning of $\mathrm{SU}(2)$ we mean a finite set
$\mathcal{D}_m\subset\mathrm{SU}(2)$ of $m$ group elements with the
property that $\mathcal{D}_m$ becomes asymptotically dense in $\mathrm{SU}(2)$
when $m\to\infty$. 
We have given examples for such partitionings in
Refs.~\cite{Jakobs:2023lpp,Hartung:2022hoz}.
These partitionings require one to discretize the differential
operators $L_a$, $R_a$ or $L^2$ directly on these sets.
This however leads to a breaking of the properties 1. and 2. mentioned in the introduction~\eqref{intro},
and hence of gauge invariance.
Both properties are recovered only with $m\to\infty$. On the other hand, 
at finite $m$ the operators
$U$ are unitary and the $L_a$, $R_a$ strictly local,
i.e.~properties 3. and 4. are preserved.

In this paper we are going to show that by giving up the locality
property of the discrete canonical momenta, one can preserve property 1.
exactly on a subspace of the truncated Hilbert space, and maintain the
Lie algebra structure and unitarity of $U$.
This is done by defining a specific
partitioning $\mathcal{D}_{N_\alpha}$ of $S_3$ with
$N_\alpha\in\mathbb{N}$ 
elements based on Euler angles $\vec\alpha$. With this specific
partitioning defined in the following sections there exists a
$q\in\{n/2, n\in\mathbb{N}\}$ and
\begin{equation} \label{eq:NqDefinition}             
  N_q\ =\ \frac{1}{3} (4q +3)(q+1)(2q + 1)\ <\ N_\alpha\,,
\end{equation}
with:
\begin{enumerate}
\item[(a)] the $N_q$ eigenvalues of $L^2$ with main quantum number
  $j\leq q$ reproduced exactly. 
\item[(b)] the Lie algebra structure preserved on the subspace
spanned by the corresponding $N_q$ (discretized) eigenvectors of $L^2$.
\item[(c)] the fundamental commutation relations exactly reproduced on
the subspace spanned by the discretized eigenvectors corresponding to
the smallest $N_{q-1/2}$ eigenvalues of $L^2$.
\item[(d)] the remaining $(N_\alpha - N_{q-1/2})$-dimensional part
  of the truncated Hilbert space possible to be projected to arbitrary
  energies above the cutoff.
\end{enumerate}
For these properties to hold for a given $q$, the partitioning must
have at least 
\begin{equation}
  N_\alpha =
  \begin{cases}
    (q + 1) \cdot (4q + 1) \cdot (4q + 1) \, , \, &\text{if} \, q \in \mathbb{N} \, ,\\
    (q + 1/2) \cdot (4q + 1) \cdot (4q + 1) \, , \, &\text{otherwise} \, ,
  \end{cases}
\end{equation}
elements, as will also be shown below.

From the above list of properties it should be clear that the proposed
scheme is expected to work well if physical states can be approximated
by linear combinations of the aforementioned $N_{q-1/2}$ eigenvectors
of $L^2$. This expectation is, however, not that far-fetched and
underlies in fact all the schemes working in a basis where $L^2$ is
diagonal, truncated at some maximal eigenvalue. So, certainly at large
values of the coupling the proposed scheme is expected to work with not
too large values of $N_q$.

In summary, our approach defines a $\mathrm{SU}(2)$ effective Hamiltonian,
preserving the unitarity of the links. The Lie algebra, equations of
motion (and Gauss' law) are fulfilled below the cutoff. Unitary links
allow for a gauge invariant state preparation~\cite{PhysRevD.11.395}, with the $U$s being
implementable as gates on a quantum device. The direct discretization of
the manifold allows for a direct comparison with Lagrangian simulations,
and for instance compare to Ref.~\cite{Hartung:2022hoz}.

An implementation for the construction of the operators described in this work can be found in \cite{DJTPaperRepo}.

\section{Theoretical background}
\label{sec-TheorBkg}

\subsection{Remarks on the digitization of $\mathrm{SU}(2)$}
\label{sec-RemarksSU2Digitization}

In this work we focus on the digitization of the gauge links and the
momenta needed to simulate the standard Wilson lattice Hamiltonian with
$N_c=2$~\cite{PhysRevD.15.1128}:
\begin{equation}
  \label{eq:KGHamiltonian}
  H = 
  \frac{g^{2}}{4}
  \sum_{\vec{x}} \sum_{\mu=1}^{d-1} \sum_{a=1}^{3} 
  \left[{(L_a)}_{\mu}^{2}(\vec{x})+{(R_a)}_{\mu}^{2}(\vec{x}) \right]-
  \frac{1}{g^{2}}
  \sum_{\vec{x}} \sum_{\mu=1, \nu<\mu}^{d-1} \,
  \mathrm{Tr}[{U}_{\mu \nu}(\vec{x}) + {U}_{\mu \nu}^\dagger(\vec{x}) ] \, .
\end{equation}
The total Hilbert space is given by the tensor product of the spaces corresponding to each pair $(\vec{x}, \mu)$. 
Therefore, in order to digitize $H$ it is sufficient
to address the problem of digitizing the degrees of freedom
``pointwise'', i.e.~at each point $\vec{x}$ and direction $\mu$. In the
following, when writing $U$, $L_a$, $R_a$, it will be understood
that these correspond to a pair $(\vec{x}, \mu)$. These have to fulfil the
relations (see App.~\eqref{sec-SUNcTheoryReview} for a review on these
properties for $\mathrm{SU}(N_c)$): \begin{equation}
\label{eq:CanonicalCommRel}
[L_a, U] = - \tau_a U \, ,\quad
[R_a, U] = U \tau_a \, .
\end{equation}
\begin{equation}
\label{eq:LaRbContraints}
 [L_a, R_b] = 0 \, , \quad
 \sum_a L_a L_a = \sum_a R_a R_a \, ,
\end{equation}
\begin{equation}
\label{eq:su2LaRaCommRelLieAlgebra}
 [L_a, L_b] = i \epsilon_{abc} L_c \, , \quad
 [R_a, R_b] = i \epsilon_{abc} R_c \, .
\end{equation}
The relations in Eq.~\eqref{eq:su2LaRaCommRelLieAlgebra} are the
standard commutation relations of quantum angular momentum~\cite{sakurai_napolitano_2017}. Using the
constraint of Eq.~\eqref{eq:LaRbContraints} we find that the irreps are
labeled by $3$ half-integer quantum numbers $(j, m_L, m_R)$:
\begin{equation}
\label{eq:jmLmRIrrepsSU2}
| j, m_L, m_R \rangle \, , \,\, 2j \in \mathbb{N} \, , \, |m_L|,|m_R| \leq j \, .
\end{equation}
This is the \textit{electric} basis, and the generators act
on its elements as follows: 
\begin{align}
\label{eq:su2IrrepsOperators}
\left(\sum_a R_a^2\right) | j, m_L, m_R \rangle = 
\left(\sum_a L_a^2\right) | j, m_L, m_R \rangle = 
j(j+1) | j, m_L, m_R \rangle \, , \\
\label{eq:su2L3convention}
L_3 | j, m_L, m_R \rangle =  m_L | j, m_L, m_R \rangle \, , \\
\label{eq:su2R3convention}
R_3 | j, m_L, m_R \rangle = -m_R | j, m_L, m_R \rangle \, , \\
\label{eq:su2Lpmconvention}
(L_1 \pm i L_2) | j, m_L, m_R \rangle = \sqrt{j(j+1) - m_L (m_L \pm 1)}| j, m_L \pm 1, m_R \rangle \, , \\
\label{eq:su2Rpmconvention}
(R_1 \mp i R_2) | j, m_L, m_R \rangle = -\sqrt{j(j+1) - m_R (m_R \pm 1)}| j, m_L, m_R \pm 1 \rangle \, .
\end{align}
$m_L$($m_R$) is the \textit{left}(\textit{right}) magnetic
quantum number, with degeneracy $(2j+1)$ at fixed $j$ and
$m_R$($m_L$). Therefore, the degeneracy of the main quantum number $j$ is
$(2j + 1)^2$. The vacuum $|0\rangle$ of the electric Hamiltonian is
the $j=0$ state. In fact (recall that $|m_L|,|m_R| \leq 0$):
\begin{align}
L_3 |0\rangle = R_3 |0\rangle = 0 |0\rangle = \vec{0} \, , \\
L_\pm |0\rangle = R_\pm |0\rangle = \vec{0} \, , 
\end{align} where $\vec{0}$ is the null vector of the Hilbert space.
In other words,
${L_a |0\rangle = R_a |0\rangle = \vec{0} \, , \, \forall a}$. 

%%% (simone) not necessary
%This is
%the only state with this property. We remark that
%$|0\rangle \neq \vec{0}$, as one can see by the expectation value of,
%e.g., $\sum_a L_a L_a + c\mathrm{1}$ (with $c$ being a fixed
%constant).

In an infinite dimensional Hilbert space we can write an exact solution
for $U$~\cite{PhysRevD.91.054506,davoudi2022general}.
As already found in Ref.~\cite{PhysRevD.11.395},
$U |0 \rangle \sim |1/2\rangle$, where the magnetic quantum numbers
$m_L$ and $m_R$ are specified by the choice of the matrix element of
$U$ in color space. In general, $U|j\rangle$ will be a combination
of the $|j+1/2\rangle$ and $|j-1/2\rangle$ states (cf.~Eq.~(27) of Ref.~\cite{davoudi2022general}). In a finite
dimensional space however, this ladder-like behavior of $U$ cannot
continue indefinitely, resulting in a ``boundary effect'' on the space
of truncated irreps of the $su(2)$ algebra.

We conclude with the following remark. A basis for Gauss' law invariant
states can be obtained by applying gauge invariant operators to the
vacuum $|0\rangle$~\cite{PhysRevD.11.395}.
Having unitary links in a quantum simulation allows an initial state
preparation that is automatically invariant, using the links as quantum
gates. This avoids the presence on unphysical contributions that need to
be removed by enforcing Gauss' law a posteriori with, e.g., a penalty
term. 

\subsection{Asymptotic behavior}\label{asymptotic-behavior}

In the following we use a basis
$\{ | U \rangle\}$ of the group elements eigenstates. 
Namely we work with operators that are functionals in the space of the wavefunctions
$\psi(U)$ which are $L^2$-integrable with respect to the \textit{Haar} measure~\cite{PhysRevD.15.1128}. 
Since the manifold $S_3$ is isomorphic to the group
$\mathrm{SU}(2)$, we can use the manifold points $p$ to
label the elements of the group: $|p\rangle \equiv |U(p)\rangle$ and
$\psi(p) \equiv \psi(U(p))$.

In this formalism, the gauge links $U$ are $\mathrm{SU}(2)$ matrices in the
fundamental representation, while the momenta $L_a$, $R_a$ are
differential operators.
%(similarly to $\vec{p} = -i \vec{\nabla}$ in not-relativistic quantum mechanics~\cite{sakurai_napolitano_2017}). 
For instance,
the $L_a$ are represented by (cf.~e.g.~Refs.~\cite{Shnir:2005xx,murata2008separability}):
\begin{align} 
\label{eq:L1.diff.op.continuum}
\dummy{L}_1 &= 
- i \left(
- \sin{\phi} \frac{\partial}{\partial \theta} 
- \cos{\phi} \cot{\theta} \frac{\partial}{\partial \phi} 
+ \frac{\cos{\phi} }{\sin{\theta}} \frac{\partial}{\partial \psi} \right)
 \, , 
\\
\label{eq:L2.diff.op.continuum}
\dummy{L}_2 &= 
- i \left(
+ \cos{\phi} \frac{\partial}{\partial \theta}
- \sin{\phi} \cot{\theta} \frac{\partial}{\partial \phi} 
+ \frac{\sin{\phi}}{\sin{\theta}} \frac{\partial}{\partial \psi} 
\right)
\, , 
\\
\label{eq:L3.diff.op.continuum}
\dummy{L}_3 &= - i \frac{\partial}{\partial \phi} 
\, ,
\end{align} where $\theta$, $\phi$, $\psi$ are the Euler angles
charting $S_3$. The operator $\sum_a \dummy{L}_a \dummy{L}_a$
is, therefore, given by: 
\begin{equation} \label{eq:Lsquared.diff.op.continuum}
\sum_a \dummy{L}_a \dummy{L}_a =
- \cot{\theta} \frac{\partial}{\partial \theta}  
- \frac{\partial^{2}}{\partial \theta^{2}} 
- \frac{1}{\sin^2{(\theta)}} \frac{\partial^{2}}{\partial \phi^{2}}  
+ 2 \frac{\cos{\theta}}{\sin^2{\theta}} \frac{\partial^{2}}{\partial \psi\partial \phi} 
- \frac{1}{\sin^{2}{\theta}}  \frac{\partial^{2}}{\partial \psi^{2}} 
\end{equation}
When truncating the Hilbert space to a dimension $N$, the states
become vectors in a finite-dimensional vector space, and the operators
endomorphisms on the latter. This inevitably results in some
approximation of the commutation relations. In fact, while it is in
principle possible to preserve the Lie algebra 
(see Eq.~\eqref{eq:su2LaRaCommRelLieAlgebra}), 
the canonical commutation relations
of Eq.~\eqref{eq:CanonicalCommRel} are to be understood in the
distributional sense. By taking the Hilbert space trace on the
left and right-hand sides, we see that in a finite-dimensional vector
space they never hold with unitary links (cf.~\cite{de2003quantum}). Therefore, numerically,
the infinite truncation limit has to be verified as a convergence of
their action on arbitrary functions $\Phi$~\cite{Jakobs:2023lpp,garofalo2022defining}:
\begin{equation}
\label{eq:LUCommutatorDistributionalSense}
\left( [L_a, U] + \tau_a U \right)  \vec{\Phi} \to 0 \, ,
\end{equation}
and analogously for the $R_a$. Given a functional
$\Phi(U)$, $\vec{\Phi}$ is a (normalized) vector whose components
converge to the values of $\Phi$ when $N \to \infty$.

We conclude this section by observing the following.
Discretizing a continuous group manifold in general breaks
gauge invariance as, e.g., the multiplication of $2$ elements of the group may not lie in the set~\cite{hamermesh2012group}. 
This can be a non-negligible problem at the renormalization level~\cite{Lepage:1998dt}.
In the Hamiltonian formulation this can be solved by a penalty term~\cite{PhysRevLett.125.030503},
leading to an effective gauge-invariant theory below the cutoff.
In our prescription, the degrees of freedom behave like the continuum manifold ones on a subspace of the truncated Hilbert space, and the projector to the remaining subspace can be used to build such a penalty term in the Hamiltonian.

\section{$\mathrm{SU}$$(2)$ theory and construction of the momenta}\label{sec-su2Theory}

In this section we provide the finite-dimensional representations of the
canonical momenta. Sec.~\eqref{sec-su2Preliminary} discusses some
general arguments about $S_3$ which are used in the subsequent
sections. We then provide an explicit construction sharing the
aforementioned properties (a)-(d) of Sec.~\eqref{sec-MainResults}.
Note that there is an alternative construction based on finite
difference operators, which we summarise for completeness in
App.~\eqref{sec-FiniteDifferences}. 

\subsection{Frequencies on $S_3$}
\label{sec-su2Preliminary}

As mentioned previously, we are working in a basis of eigenstates of
the (unitary) operator $U$, discretized by means of a partitioning
$\mathcal{D}_{N_\alpha}$. For this we have to express the operators
$L^2$ and $L_a$ ($R^2$, $R_a$) in this basis while trying to preserve
as much of the continuum properties of the operators mentioned in the
introduction~\eqref{intro} as possible. The handle we can use to
optimise for the continuum properties is the choice of the element of
$\mathcal{D}_{N_\alpha}$ as well as the construction of the momenta. 
More specifically, the strategy is to 
choose these elements such that the lowest $N_q$ (see Eq.~\eqref{eq:NqDefinition}) continuum eigenvectors
of $L^2$ (and $R^2$) can be uniquely represented on this partitioning.

In the continuum manifold limit, we can use the isomorphy
of $\mathrm{SU}(2)$ to $S_3$~\cite{Shnir:2005xx}, labelling the
elements of the partitioning by the three Euler
angles $\theta, \phi, \psi$ with the following convention: 
\begin{equation}
\label{eq:EulerAnglesBounds}
0 \leq \theta \leq \pi\,, \quad 0 \leq \phi \leq 4\pi\,, \quad 0 \leq \psi \leq 4\pi\,.
\end{equation}
In the continuum manifold, the irreducible
representations of the $su(2)$ algebra are labelled by $3$
half-integers $(j, m_L, m_R)$, where $2j \in \mathbb{N}$ and
$m_L, m_R=-j,\ldots,j$, and the eigenfunctions of $L^2$ are the
Wigner functions $D^j_{m_L m_R}$~\cite{Shnir:2005xx}:
\begin{equation}
\label{eq:WignerDFunctionsDefinition}
D^{j}_{m_L m_R}(\vec{\alpha}) \equiv D_{m_L m_R}^{j}(\phi,\theta,\psi)
= e^{i m_L \phi}d_{m_L m_R}^{j}(\theta)e^{i m_R \psi} \, ,
\end{equation}
with $d_{m_L m_R}^j$ the so-called Wigner d-functions~\cite{Celeghini:2014uxa}
\begin{equation}
  \label{eq:WignerdFunctionsDef}
  d_{m_L m_R}^{j}(\theta)=
  \left(\frac{(j-m_L)!(j+m_L)!}{(j-m_R)!(j+m_R)!}\right)^{\frac{1}{2}}(1-x)^{\frac{m_L
  + m_R}{2}}(1+x)^{-\frac{m_L-m_R}{2}} J^{m_R - m_L, m_R + m_L}_{j -
  m_R} (x) \, , \quad x = \cos{(\theta)} \, , 
\end{equation}
where $J$ is a Jacobi polynomial.
One can interpret $(j,m_L,m_R)$ as the Fourier
frequencies for $\theta, \phi, \psi$ respectively, 
as functions on $S_3$ can be spectrally decomposed
by means of the $D^J_{m_L m_R}$ 
(see Sec.~4.10 of Ref.~\cite{doi:10.1142/0270}):
\begin{equation}
\label{eq:S3SpectralDecompositionContinuum}
f(\vec{\alpha}) = f(\phi,\theta,\psi) = \sum_{j=0}^{\infty} \sum_{m_L, m_R = -j}^{j} a^j_{m_L m_R} D_{m_L m_R}^{j}(\phi,\theta,\psi) \, .
\end{equation}
This is analogous to the spherical harmonics transform on the sphere $S_2$~\cite{sansone1959orthogonal}.

According to the Nyquist-Shannon (or Whittaker-Kotelnikow-Shannon) sampling theorem~\cite{1697831},
the sampling rate for each direction must be at least twice the bandwidth to
be able to uniquely reconstruct functions up to the corresponding maximal frequency.
Thus, on a finite partitioning, if $N_q$ is the number of modes with $j\leq q$, in general we will need
a significantly larger $N_\alpha$ to be able to represent these $N_q$
modes exactly on $\mathcal{D}_{N_\alpha}$.
This property is the analogous of the sampling theorems for $S_2$~\cite{6006544}.
From the physical point of
view we are saying that frequencies higher than a threshold $j=q$
have to be treated as unphysical. However, since we regularize the
gauge theory with a cutoff, one can chose $q$ large enough, such that
the modes with $j>q$ are above this cutoff.

\subsection{A Discrete Jacobi Transform on $S_3$}
\label{sec-DJT}

In this section we discuss how to chose the elements of
$\mathcal{D}_{N_\alpha}$ in order to be able to represent the modes
with $j\leq q$ exactly.
The construction is
based on Jacobi polynomials, and can be viewed as a generalization of
the discrete orthogonal transform for $S_2$ found in Ref.~\cite{BARONE199029}. We will use it to build
a finite-dimensional representation of the canonical momenta in Sec.~\eqref{sec-su2MatrixOperators}. 
We start with defining partitionings with the help of polynomials:

\begin{definition}\label{def-PPCircle}

\textbf{(Polynomial partitioning of the circle)}: Consider a set of
orthogonal polynomials
$\{p_k(x) \, , \, k = 0, \ldots, n , \, x=\cos{(\theta)}\}$ such that
$p_n$ has $n$ roots (in $[-1, 1]$). We call the set of these
roots, $\{\theta_s \, , \, s = 1, \ldots, n\}$, a polynomial
partitioning of the circle.

An example of such a partitioning many are familiar with is the one
induced by  Chebyshev polynomials of $1$st kind: 
% (see also app.~\eqref{sec-u1Theory})
$p_k(x) = \cos{(k\theta)}$. The roots of $p_N$ are
$\theta_s = \frac{2\pi}{(2j+1)} s , \, s=-j,\ldots,j \,, N=2j+1$.

\end{definition}

\begin{definition}\label{def-PPS3}

\textbf{(Polynomial partitioning of the 3-sphere)}: Let $\{\theta_s\}$
be a polynomial partitioning from def.~\eqref{def-PPCircle}. A
polynomial partitioning of $S_3$ is a set of angular coordinates
$\vec{\alpha}_k=(\theta_a, \phi_b, \psi_c)$ such that:

\begin{align}
  \label{eq:S3PolynomialPartitioning}
\theta_a & \, , \, a = 1, \ldots, N_\theta \\
\phi_b &= \frac{4\pi}{N_\phi} b \, , \, b = 1, \ldots, N_\phi \, , \\
\psi_c &= \frac{4\pi}{N_\psi} c \, , \, c = 1, \ldots, N_\psi \, .
\end{align}

$k=1,\ldots,N_\alpha$, where $N_\alpha=N_\theta N_\phi N_\psi$ is
the total number of points on the sphere.

\end{definition}

For instance, the Legendre-partitioning will be such that the
$\theta_s$ are the roots of the $N_\theta$-th Legendre polynomial,
while the $\phi$ an $\psi$ will always be evenly distributed along
the corresponding circles.

Next we recall the following property of orthogonal polynomials
(cf.~theorem (3.6.12) of Ref.~\cite{king1982introduction}):

\begin{theorem}
\label{thm-PolynomialWeights}

Let $\langle \cdot, \cdot \rangle$ be a scalar product on the linear
space $L^2[a,b]$:
\begin{equation}
  \label{eq:DefInnerProduct}
  \langle f, g \rangle = \int_{a}^{b} \mathrm{d}x \, \omega(x) f(x) g(x)
  \, ,
\end{equation}
where $\omega(x)$ is the weight function. Now let
$\{p_k(x)\}_{k=0,\ldots,n}$ be a set of orthogonal polynomials, and
$x_1,\ldots,x_n$ the roots of $p_n(x)$. If the \textit{weights}
$w_1,\ldots,w_n$ are the solution of the (non-singular) system of
equations:
\begin{equation}
    \label{eq:GaussianWeightsLinearSystem}
    \sum_{s=1}^{n} p_k(x_s) w_s = 
      \begin{cases}
      \langle p_0, p_0 \rangle & k=0\,, \\
      0 & k = 1, \ldots, n-1\,.
      \end{cases}
  \end{equation}
Then $w_s > 0 \, , \, s=1,\ldots,n$ and:
\begin{equation}
  \label{eq:WeightedSumPolynomialEqualIntegral}
  \sum_{s=1}^{n} p(x_s) w_s = 
  \int_{a}^{b} \mathrm{d}x \, \omega(x) p(x) \, ,
\end{equation}
where $p(x)$ is any polynomial of degree less than $2n+1$.

\end{theorem}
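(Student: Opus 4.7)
The plan is to establish three facts in sequence: that the linear system~\eqref{eq:GaussianWeightsLinearSystem} has a unique solution, that the resulting quadrature identity extends to the full claimed space of polynomials, and that the weights are strictly positive as a consequence. The classical Gaussian quadrature argument organises all three around the single idea of polynomial division by $p_n$.

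First I would show that the matrix $M_{ks}=p_k(x_s)$, with $k=0,\ldots,n-1$ and $s=1,\ldots,n$, is non-singular. If a row-combination $\sum_k c_k p_k(x_s)$ vanished for every $s$, then $P(x)=\sum_k c_k p_k(x)$ would be a polynomial of degree at most $n-1$ with $n$ distinct zeros $x_1,\ldots,x_n$, hence $P\equiv 0$, and by linear independence of $p_0,\ldots,p_{n-1}$ all $c_k$ vanish. The system~\eqref{eq:GaussianWeightsLinearSystem} therefore admits a unique solution $w_1,\ldots,w_n$.

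Next I would note that for a standard orthogonal family $p_0$ is constant, so $\langle p_0,p_k\rangle$ coincides (up to the constant) with $\int_a^b\omega(x)p_k(x)\,dx$ for every $k=0,\ldots,n-1$. The right-hand side of~\eqref{eq:GaussianWeightsLinearSystem} is thus exactly $\int_a^b \omega\, p_k\,dx$, and by linearity the quadrature identity $\sum_s p(x_s)w_s=\int_a^b\omega(x)p(x)\,dx$ holds for every polynomial of degree strictly less than $n$. To extend it, given $p$ of degree less than $2n$, I would divide by $p_n$, writing $p=p_n q+r$ with $\deg q,\deg r<n$. At the nodes $p(x_s)=r(x_s)$ since $p_n(x_s)=0$, while $\int\omega p=\int\omega p_n q+\int\omega r=\int\omega r$ because $p_n$ is orthogonal to every polynomial of degree $<n$. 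The two sides then match by the base case, proving~\eqref{eq:WeightedSumPolynomialEqualIntegral}.

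Positivity of the weights then follows by taking $p=\ell_k^2$, where $\ell_k$ is the Lagrange interpolant of degree $n-1$ with $\ell_k(x_j)=\delta_{jk}$. This $p$ has degree $2(n-1)<2n$, so the identity just established gives $w_k=\int_a^b \omega(x)\,\ell_k(x)^2\,dx>0$, since $\omega>0$ and $\ell_k^2$ is a non-negative polynomial not identically zero. The main subtlety I anticipate is the precise degree bound: the division argument works cleanly for $\deg p<2n$, which is the familiar Gauss quadrature range. The statement as quoted allows one extra degree; accommodating $\deg p=2n$ would require $\langle p_n,q\rangle=0$ also when $\deg q=n$, which only holds under an additional normalisation hypothesis on $p_n$. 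I would first verify whether this is a typographical slip for $<2n$ or whether a convention on the family's leading coefficient closes the gap; in any case the applications later in the paper only invoke exactness up to degree $2n-1$, so the weaker form already suffices.
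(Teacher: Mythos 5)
Your proof is correct, but note first that the paper itself offers no proof of this theorem: it is recalled as a known classical result, with a pointer to theorem (3.6.12) of Ref.~\cite{king1982introduction}. Your argument is the standard Gaussian-quadrature proof --- nonsingularity of the generalized Vandermonde matrix $p_k(x_s)$ by zero-counting, exactness on degrees $\leq n-1$ directly from the system~\eqref{eq:GaussianWeightsLinearSystem}, extension by the division $p = p_n q + r$ with $p(x_s)=r(x_s)$ and $\langle p_n, q\rangle = 0$, and positivity via $p=\ell_k^2$ --- so there is nothing in the paper to compare it against beyond the textbook source, with which your route agrees. Two details deserve to be made explicit. Your nonsingularity step and your Lagrange interpolants both silently use that $x_1,\ldots,x_n$ are \emph{distinct}; this is the classical fact that the roots of $p_n$ for a positive weight $\omega$ are real, simple, and lie in $(a,b)$, and it should be invoked rather than assumed. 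Likewise, identifying the right-hand side of~\eqref{eq:GaussianWeightsLinearSystem} with $\int_a^b \omega\, p_k\,\mathrm{d}x$ requires the normalization $p_0 \equiv 1$: for $p_0 \equiv c$ the $k=0$ equation forces $\sum_s w_s = c\int_a^b\omega\,\mathrm{d}x$, which matches the quadrature value $\int_a^b\omega\,\mathrm{d}x$ only when $c=1$. You wave at this with ``up to the constant''; in fact the theorem as stated is only correct under that convention, so it must be assumed, as you effectively do.

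On the degree bound, your suspicion of ``less than $2n+1$'' is well founded, but your hedge that $\deg p = 2n$ might be accommodated ``under an additional normalisation hypothesis on $p_n$'' is not tenable. Take $p = p_n^2$: it has degree exactly $2n$, vanishes at every node, so the weighted sum is zero, while $\int_a^b \omega\, p_n^2\,\mathrm{d}x = \langle p_n, p_n\rangle > 0$. No normalization repairs this, so Eq.~\eqref{eq:WeightedSumPolynomialEqualIntegral} is genuinely false at degree $2n$, and the paper's bound is an off-by-one slip for ``less than $2n$''. As you correctly observe, the slip is harmless downstream: lemma~\eqref{lem-LemmaDJT} applies the theorem only to products of Jacobi polynomials of degree $j_1 + j_2 \leq 2q < n$, well inside the valid range $\deg p \leq 2n-1$.
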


The weights $w_s$ are often called Gaussian weights, since they can be
used to integrate numerically $e^{-x^2}$ by Taylor expanding it to a
polynomial of finite degree.

Using theorem~\eqref{thm-PolynomialWeights} we now formulate the
following lemma:

\begin{lemma}
\label{lem-LemmaDJT}

Let $d^{j}_{m_L m_R}(\theta)$ be the Wigner d-functions
introduced in Eq.~\eqref{eq:WignerdFunctionsDef}.
We assume
$2j \in \mathbb{N}, \, j \leq q$ and $|m_L|,|m_R| < j$. Let also
$w_s$ be the weights of theorem~\eqref{thm-PolynomialWeights} with
$n > 2q$ and weight function $\omega(x)=1$. Then if $j_1$ and
$j_2$ are both integers or both half integers:

\begin{equation}
  \label{eq:DJTLemmaStatement}
  \sum_{s=1}^{n} 
    w_s \, d^{j_1}_{m_L m_R}(\theta_s) d^{j_2}_{m_L m_R}(\theta_s) =
    \frac{1}{{j_1 + 1/2}} \delta_{j_1 j_2}
    \, .
\end{equation}

\end{lemma}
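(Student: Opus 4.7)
The plan is to reduce the identity to a Gaussian quadrature on the interval $[-1,1]$, by showing that after the change of variable $x = \cos\theta$ the integrand $d^{j_1}_{m_L m_R}(\theta) d^{j_2}_{m_L m_R}(\theta)$ becomes a polynomial in $x$ of degree at most $j_1 + j_2 \leq 2q$, and then invoking Theorem~\ref{thm-PolynomialWeights} together with the classical orthogonality of the Wigner $d$-functions.

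First I would analyse the polynomial structure of the product using the explicit expression in~\eqref{eq:WignerdFunctionsDef}. Each factor has a prefactor $(1-x)^{(m_L+m_R)/2}(1+x)^{(m_R-m_L)/2}$ times a Jacobi polynomial $J^{m_R-m_L,m_R+m_L}_{j-m_R}(x)$, so the product carries a prefactor $(1-x)^{m_L+m_R}(1+x)^{m_R-m_L}$ whose exponents are integers, since the assumption that $j_1$ and $j_2$ are both integer or both half-integer forces $m_L,m_R$ to be of a common integer/half-integer type. When $m_R \geq |m_L|$ this prefactor is itself a polynomial; the remaining sign configurations are reduced to this case via standard symmetry relations of the Wigner $d$-functions (which permute the magnetic quantum numbers and flip their signs). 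The two Jacobi polynomial factors contribute degrees $j_1-m_R$ and $j_2-m_R$, while the prefactor adds degree $2m_R$, giving a total degree of $j_1+j_2 \leq 2q$.

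Next I would apply Theorem~\ref{thm-PolynomialWeights} with weight function $\omega(x)=1$ on $[-1,1]$, so that $\{\theta_s\}$ are the Legendre-Gauss nodes and $\{w_s\}$ the Legendre-Gauss weights. Since $n>2q$, the polynomial degree is strictly below the exactness threshold, and the weighted sum reproduces the integral:
\begin{equation*}
\sum_{s=1}^{n} w_s\, d^{j_1}_{m_L m_R}(\theta_s)\, d^{j_2}_{m_L m_R}(\theta_s) = \int_{-1}^{1} dx\; d^{j_1}_{m_L m_R}(x)\, d^{j_2}_{m_L m_R}(x).
\end{equation*}
The right-hand side is the classical orthogonality integral for Wigner $d$-functions, $\tfrac{2}{2j_1+1}\,\delta_{j_1 j_2}=\tfrac{1}{j_1+1/2}\,\delta_{j_1 j_2}$, which can be derived either from the orthogonality of the full $D^{j}_{m_L m_R}$ on $\mathrm{SU}(2)$ by integrating out the $\phi,\psi$ Euler angles, or directly from the Jacobi polynomial normalization combined with the explicit prefactor in~\eqref{eq:WignerdFunctionsDef}.

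The main obstacle will be the polynomial step: the prefactor $(1-x)^{(m_L+m_R)/2}(1+x)^{(m_R-m_L)/2}$ for a single $d^j$ is in general not polynomial (and not even single-valued for odd $m_L\pm m_R$), so one must either split into sign cases and use the $d$-function symmetries, or invoke the representation-theoretic fact that matrix elements of $\mathrm{SU}(2)$ rotations are polynomials in $\cos\theta$ whose degree is bounded by the sum of spins. Once polynomiality is established, the remaining steps are a routine application of Gaussian quadrature and of the known $d$-function normalization.
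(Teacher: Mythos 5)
Your proposal is correct and takes essentially the same route as the paper's proof: establish that the product $d^{j_1}_{m_L m_R}d^{j_2}_{m_L m_R}$ is a polynomial in $x=\cos\theta$ of degree $j_1+j_2\leq 2q$, invoke the quadrature exactness of theorem~\eqref{thm-PolynomialWeights} with $\omega(x)=1$ and $n>2q$, and conclude with the continuum orthogonality relation Eq.~\eqref{eq:AJFOrthogonality}. The only (minor) divergence is in the polynomiality step, which you correctly flag as the main obstacle and resolve by sign-case analysis with $d$-function symmetries: the paper instead applies the relation $d^{j}_{m_L,m_R}=(-1)^{m_L+m_R}d^{j}_{-m_L,-m_R}$ to \emph{one} factor only, so that the non-polynomial prefactors $(1-x)^{\pm(m_L+m_R)/2}(1+x)^{\mp(m_L-m_R)/2}$ of the two factors cancel identically, leaving a manifest product of two Jacobi polynomials of total degree $j_1+j_2$ without any case distinctions.
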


\begin{proof}
We recall that under a sign change of $m_L$ and $m_R$ we have
$d^{j}_{m_L, m_R}(\theta) = (-1)^{m_L + m_R} d^{j}_{-m_L, -m_R}(\theta)$ (see~\cite{Celeghini:2014uxa}). Therefore after
some agebra we get,
\begin{equation}
  \label{eq:dFunctionsPropertyWeights}
  \begin{split}
    \sum_{s} w_s \,
    d^{j_1}_{m_L, m_R}(\theta_s) d^{j_2}_{m_L m_R}(\theta_s) &=
    (-1)^{m_L + m_R} \, \sum_{s} 
    w_s d^{j_1}_{-m_L, -m_R}(\theta_s) d^{j_2}_{m_L m_R}(\theta_s) \\
    &= (-1)^{m_L + m_R} \, 
    \sum_{s} w_s \,
    J^{-m_R + m_L, -m_R - m_L}_{j_1+m_R}(x_s) 
    J^{m_R -m_L, m_R+m_L}_{j_2-m_R}(x_s)
    \, .
  \end{split}
\end{equation}
Now, whether $j$ is integer or half-integer,
$j \pm m_R$ is always integer valued. Second, the product of two polynomials
of degree $n_1$ and $n_2$ is $n_1+n_2$, therefore the product
\[
J^{-m_R + m_L, -m_R - m_L}_{j_1+ m_R}(x_s)\,\cdot\, J^{ m_R -m_L,
m_R+m_L}_{j_2- m_R}(x_s)
\]
is a polynomial of degree $j_1+j_2$, which is also integer valued.
Finally, since $j_1 + j_2 \leq 2q < n$ by definition, we can replace
the weighted sum with the integral:
\begin{equation}
  \label{eq:DJTLemmaStep1}
  \begin{split}
    (-1)^{m_L + m_R} \sum_{x_s} w_s \,
    J^{-m_R + m_L, -m_R - m_L}_{j_1+m_R}(x_s) 
    J^{m_R -m_L, m_R+m_L}_{j_2-m_R}(x_s) \\
    =
    (-1)^{m_L + m_R} \int_{-1}^{1} \mathrm{d}x J^{-m_R + m_L, -m_R - m_L}_{j_1+m_R}(x) J^{m_R -m_L, m_R+m_L}_{j_2-m_R}(x) \, .
  \end{split}
\end{equation}
We now use again the $m_L, m_R$ sign change property of the d-functions,
in the other direction, to conclude the proof:
\begin{equation}
  \label{eq:DJTLemmaStep2}
  \begin{split}
    (-1)^{m_L + m_R} \, &\int_{-1}^{1} \mathrm{d}x \,
    J^{-m_R + m_L, -m_R - m_L}_{j_1+m_R}(x) 
    J^{m_R -m_L, m_R+m_L}_{j_2-m_R}(x) \\=
    & \int_{-1}^{1} \mathrm{d}x \,
    d^{j_1}_{m_L, m_R}(\arccos{x}) d^{j_2}_{m_L, m_R}(\arccos{x})
    = \frac{1}{{j_1 + 1/2}} \delta_{j_1 j_2} \, .
  \end{split}
\end{equation}
In the last step we have used the well known orthogonality
property~\cite{Celeghini:2014uxa} 
\begin{equation}
  \label{eq:AJFOrthogonality}
  \int_{-1}^{1} \mathrm{d}x \, 
  \mathcal{J}_{j_1}^{m_L, m_R}(x) \, 
          {(j_1 + 1/2)} \, 
          \mathcal{J}_{j_2}^{m_L, m_R}(x) 
          = \delta_{j_1 j_2}
          \, .
\end{equation}
of the Algebraic Jacobi Polynomials
\[
\mathcal{J}^{m_R,
  -m_L}_{j}(\cos{(\theta)}) = d^{j_1}_{m_L,
  m_R}(\theta)\,.
\]
\end{proof}
Now we can define the discrete transform anticipated above:

\begin{definition} \label{def-DJTS3Sphere}

\textbf{(Discrete Jacobi transform)}: Let $\vec{\alpha}_k$ be a
Polynomial Jacobi-partitioning of $S_3$. The following equation defines
the Discrete Jacobi Transform (DJT) of a function $f$ on on $S_3$:
\begin{equation}
  \label{eq:DJTMatrix}
  f(\vec{\alpha}_k) = f(\theta, \phi, \psi) = \sum_{j=0}^{q} \sum_{m_L, m_R = -j}^{j} {(\operatorname{DJT})}^j_{m_L, m_R}(\vec{\alpha}_k) \hat{f}(j, m_L, m_R)
  \, ,
\end{equation}
where:
\begin{equation}
  {(\operatorname{DJT})}^j_{m_L, m_R}(\vec{\alpha}_k) = 
  (j + 1/2)^{1/2}  
  \sqrt{\frac{w_s}{N_\phi N_\psi}}
  D^j_{m_L, m_R}(\vec{\alpha}_k)
  \, .
\end{equation}
\end{definition}
\noindent We recall that $D^j_{m_L, m_R}$ are the Wigner D-functions of Eq.~\eqref{eq:WignerDFunctionsDefinition}. If we list 
all the values of $f$ on $S_3$ in a vector of size $N_\alpha$, and
all the moments of the distribution $\hat{f}$ in a vector of size
\begin{equation}
  \label{eq:NqSumDegeneracy}
  N_{q}=\sum_{j=0}^{q} (2j+1)^2 = \frac{1}{6} (4q+3)(2q+2)(2q+1)\,,
\end{equation}
Eq.~\eqref{eq:DJTMatrix} can be understood in the matrix sense:
\begin{equation}
  \label{eq:DJTMatrixSense}
  \vec{f}_i = {(\operatorname{DJT})}^{i}_{b} \vec{\hat{f}}_b \, ,
\end{equation}
where the indices $i$ and $b$ are the checkerboard
indices of the $\vec{\alpha}$\,s and the $(j,m_L,m_R)$ triplets,
respectively. One important property of the $\operatorname{DJT}$ is the following:

\begin{theorem}
\label{thm-DJTUnitarity}

If $N_\theta > q$, $N_\phi > 4q$ and $N_\psi > 4q$, the DJT matrix
has orthonormal columns, i.e.:
\begin{equation}
  \label{eq:DJTUnitarity}
        {(\operatorname{DJT})}^\dagger {(\operatorname{DJT})} = \mathrm{1}_{N_q \times N_q}
\end{equation}

\end{theorem}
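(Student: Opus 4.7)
The plan is to compute the matrix product $[\operatorname{DJT}^\dagger\operatorname{DJT}]$ entrywise and show it is the identity. Since the partitioning is a tensor product in the three Euler angles and the Wigner $D$-functions factor as $D^j_{m_L,m_R}(\phi,\theta,\psi)=e^{im_L\phi}\,d^j_{m_L,m_R}(\theta)\,e^{im_R\psi}$, the sum over the $N_\alpha=N_\theta N_\phi N_\psi$ grid points splits into three independent sums. I would handle the two circular ($\phi,\psi$) sums first to collapse the magnetic indices via discrete Fourier arguments, and then apply Lemma~\ref{lem-LemmaDJT} to the remaining $\theta$-sum, with the normalization factors $(j+1/2)^{1/2}$ and $\sqrt{w_s/(N_\phi N_\psi)}$ tailored precisely to turn the result into $\delta_{j_1j_2}\delta_{m_{L_1}m_{L_2}}\delta_{m_{R_1}m_{R_2}}$.

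For the $\phi$-sum, the key identity is
\begin{equation}
\frac{1}{N_\phi}\sum_{b=1}^{N_\phi} e^{i(m_{L_2}-m_{L_1})\phi_b}
=\frac{1}{N_\phi}\sum_{b=1}^{N_\phi} e^{i2\pi\cdot 2(m_{L_2}-m_{L_1})b/N_\phi},
\end{equation}
which is a standard geometric sum that equals $1$ iff $2(m_{L_2}-m_{L_1})\equiv 0\pmod{N_\phi}$. Because $|m_{L_1}|,|m_{L_2}|\leq q$ gives $|2(m_{L_2}-m_{L_1})|\leq 4q<N_\phi$, the only admissible solution is $m_{L_1}=m_{L_2}$. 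A small parity check is needed when $j_1$ and $j_2$ have opposite ``integer/half-integer'' character: then $2(m_{L_2}-m_{L_1})$ is a nonzero odd integer of absolute value $\leq 4q$, and the sum vanishes. The $\psi$-sum is entirely symmetric and yields $\delta_{m_{R_1}m_{R_2}}$ under $N_\psi>4q$.

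What remains of the inner product is the $\theta$-sum
\begin{equation}
\sqrt{(j_1+1/2)(j_2+1/2)}\,\sum_{s=1}^{N_\theta} w_s\,d^{j_1}_{m_L,m_R}(\theta_s)\,d^{j_2}_{m_L,m_R}(\theta_s),
\end{equation}
with $m_L=m_{L_1}=m_{L_2}$ and $m_R=m_{R_1}=m_{R_2}$. Since $m_{L_1}=m_{L_2}$ forces $j_1$ and $j_2$ to be both integer or both half-integer, Lemma~\ref{lem-LemmaDJT} applies and returns $(j_1+1/2)^{-1}\delta_{j_1j_2}$, exactly canceling the square-root prefactors and producing $\delta_{j_1j_2}$. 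Combined with the previous two $\delta$'s, this is the claimed orthonormality.

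The main technical checkpoint is ensuring $N_\theta>q$ is actually sufficient to invoke the lemma: after rewriting each $d$-function using the sign-flip identity in the proof of Lemma~\ref{lem-LemmaDJT}, the integrand in the weighted sum is a polynomial in $x=\cos\theta$ of degree $j_1+j_2\leq 2q$, and Gauss--Legendre quadrature with $N_\theta$ nodes is exact up to degree $2N_\theta-1$, giving the required bound $N_\theta>q$. The rest of the argument is essentially bookkeeping of the half-integer labels in the discrete Fourier step; one should be slightly careful to allow mixed-parity $(j_1,j_2)$ in the intermediate computation and only invoke the lemma after the $\phi$- and $\psi$-sums have enforced identical parity through $m_{L_1}=m_{L_2}$.
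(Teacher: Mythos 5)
Your proposal is correct and follows essentially the same route as the paper's own proof: factorize the grid sum over $(\theta,\phi,\psi)$, collapse the magnetic indices via the discrete geometric-sum identity (Eq.~\eqref{eq:DeltaFromSumPhases}, with the bounds $N_\phi,N_\psi>4q$ accounting for the factor of $2$ from the $4\pi$-periodic angles and the mixed-parity case), and then apply Lemma~\eqref{lem-LemmaDJT} to the remaining $\theta$-sum so the $(j+1/2)$ normalizations cancel. Your final checkpoint is in fact slightly more careful than the paper, which invokes the lemma under its stated hypothesis $n>2q$ while the theorem only assumes $N_\theta>q$; your observation that Gauss--Legendre quadrature with $N_\theta$ nodes is exact to degree $2N_\theta-1\geq j_1+j_2$ correctly closes that small gap.
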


\begin{proof}
  We need to show that
  $({(\operatorname{DJT})}^\dagger {(\operatorname{DJT})})_{b_1 b_2} = \delta_{b_1 b_2} = \delta_{j_1 j_2} \delta_{{m_L}_1 {m_L}_2} \delta_{{m_R}_1 {m_R}_2}$.
  From the explicit expression of ${(\operatorname{DJT})}$ we find:
  \begin{equation}
    \label{eq:DJTUnitarityProofSteps}
    \begin{split}
      ({(\operatorname{DJT})}^\dagger {(\operatorname{DJT})})_{b_1 b_2} &= {{(\operatorname{DJT})}^{*}}_{b_1}^{k} {(\operatorname{DJT})}^{k}_{b_2} =
      \sum_{k} (v^{j_1}_{{m_L}_1 {m_R}_1})^* (D^{j_1}_{{m_L}_1 {m_R}_1}(\vec{\alpha}_k))^*
      v^{j_2}_{{m_L}_2 {m_R}_2} D^{j_2}_{{m_L}_2 {m_R}_2}(\vec{\alpha}_k) \\
      &=
      \sum_{\theta, \phi,\psi} 
          [(j_1 + 1/2)(j_2 + 1/2)]^{1/2}
          \frac{w_s}{N_\phi N_\psi}
          e^{i({m_L}_2-{m_L}_1)\phi} e^{i({m_R}_2-{m_R}_1)\psi}\times\\ 
          &\qquad\qquad\times d^{j_1}_{{m_L}_1 {m_R}_1}(\theta)
          d^{j_2}_{{m_L}_2 {m_R}_2}(\theta) \\
          &= 
          \delta_{{m_L}_1 {m_L}_2} \delta_{{m_R}_1 {m_R}_2} 
                [(j_1 + 1/2)(j_2+1/2)]^{1/2}
                \sum_{s} w_s d^{j_1}_{{m_L}_1 {m_R}_1}(\theta_s)
                d^{j_2}_{{m_L}_1 {m_R}_1}(\theta_s) \\
                &= 
                \delta_{{m_L}_1 {m_L}_2} \delta_{{m_R}_1 {m_R}_2} \frac{[(j_1 + 1/2)(j_2+1/2)]^{1/2}}{{j_1 + 1/2}}
                \delta_{j_1 j_2} \\
                &=
                \delta_{{m_L}_1 {m_L}_2} \delta_{{m_R}_1 {m_R}_2} \delta_{j_1 j_2}
                \, .
    \end{split}
  \end{equation}
  In the intermediate step we have used the known relation (${N,k \in \mathbb{N},\, N>k}$): 
  \begin{equation}
  \label{eq:DeltaFromSumPhases}
  \sum_{\ell=0}^{N-1} e^{i \frac{2\pi}{N} \ell k} = 
  N \delta_{k, 0} \, ,
  \end{equation}
  % Eq.~\eqref{eq:DeltaFromSumPhases},
  which applies here by the definition of the paritioning and the lower
  bounds on $N_\phi$ and $N_\psi$. The $\delta_{{m_L}_1,{m_L}_2}$ and
  $\delta_{{m_R}_1 {m_R}_2}$ ensure also that $j_1$ and $j_2$ are either
  both integers or both half-integers. The final step is done using the
  lemma~\eqref{lem-LemmaDJT}.
\end{proof}

A possible choice (and also the one we will use from now on) is,
therefore:
\begin{equation}
  \label{eq:Nalpha.from.q}
  N_\alpha = N_\theta N_\phi N_\psi =
  \begin{cases}
    (q + 1/2) \cdot (4q + 1) \cdot (4q + 1) \, , \, &\text{if} \, q \notin \mathbb{N} \wedge 2q \in \mathbb{N} \, ,
    \\
    (q + 1) \cdot (4q + 1) \cdot (4q + 1) \, , \, &\text{if} \, q \in \mathbb{N} \, .
  \end{cases}
\end{equation}
%
%We conclude by noting that this choice for $N_\alpha$ is optimal given the constraints of Theorem~\eqref{thm-SU2RectMatrix}.

\subsection{Canonical momenta with the DJT}
\label{sec-su2MatrixOperators}

For $\mathrm{SU}(2)$, classically each gauge link is parametrized by $2$
complex numbers $a, b$:
\begin{equation}
  \mathcal{U} = 
  \begin{pmatrix}
    a & b \\
    -b^* & a^*
  \end{pmatrix}
  \, , \, |a|^2 + |b|^2 = 1 \, .
\end{equation}
This requires one real number and $2$ phases. We can
use the three Euler angles of Eq.~\eqref{eq:EulerAnglesBounds}:
\begin{equation}
  \label{eq:su2UalphaMatrix}
  \begin{split}
    \mathcal{U}(\vec{\alpha}) &= 
    e^{-i \phi \tau_3} e^{-i \theta \tau_2} e^{-i \psi \tau_3}
    =
    \begin{pmatrix}
      \cos({\theta}/{2}) e^{-i(\phi+\psi)/2} & -\sin({\theta}/{2}) e^{-i(\phi-\psi)/2} \\
      \sin({\theta}/{2}) e^{i(\phi-\psi)/2} & \cos({\theta}/{2}) e^{i(\phi+\psi)/2}
    \end{pmatrix}
    \\
    &=
    \begin{pmatrix}
      D^{1/2}_{-1/2, -1/2}(\vec{\alpha}) &
      -D^{1/2}_{-1/2, +1/2}(\vec{\alpha}) \\
      -D^{1/2}_{+1/2, -1/2}(\vec{\alpha}) &
      D^{1/2}_{+1/2, +1/2}(\vec{\alpha})
    \end{pmatrix}
    \, .
  \end{split}
\end{equation}
$\tau_a = \sigma_a/2$ are the the generators of $\mathrm{SU}(2)$ in the fundamental irrep $j=1/2$ (see App.~\eqref{sec-SUNcTheoryReview}),
and $D^j_{m_L, m_R}$ are the Wigner D-functions of Eq.~\eqref{eq:WignerDFunctionsDefinition}.
The generalization of Eq.~\eqref{eq:su2UalphaMatrix} to higher irreps is obtained by replacing the $\tau_a$ with the generators ${(T_j)_a}$ of the $j$-th irrep (cf.~e.g.~\cite{PhysRevD.11.395}).
$\mathcal{U}$ is invariant under the simultaneous transformation ${\phi \to \phi+2\pi}$ and
${\psi \to \psi+2\pi}$. Therefore we need $0 \leq \phi \leq 2\pi$
(see e.g.~appendix A of Ref.~\cite{Shnir:2005xx})
in order to avoid a double counting of the elements. However, even if in
our discretization we have chosen $0 \leq \phi \leq 4\pi$, the double
counting does not happen. In fact, $N_\phi$ and $N_\psi$ are both
odd numbers (see Eq.~\eqref{eq:Nalpha.from.q}), and the above
transformation is never realized in the partitioning.

When we quantize the theory, the eigenstates of the quantum operator
${U}$ are such that: \begin{equation}
\label{eq:UketalphaAction}
{U} | \vec{\alpha} \rangle  = \mathcal{U}(\vec{\alpha}) 
| \vec{\alpha} \rangle
\, .
\end{equation}
%
%where we recall the shorthand
%$\vec{\alpha} = (\theta, \phi, \psi)$.
This means that we can work in
the basis where the links are unitary and diagonal in the Hilbert space:
\begin{equation}
\label{eq:su2UalphaOperatorMatrix}
{U} = 
\begin{pmatrix}
    U_{-1/2, -1/2} &
    U_{-1/2, 1/2} \\
    U_{1/2, -1/2} &
    U_{+1/2, +1/2}
\end{pmatrix}
=
\sum_{k=1}^{N_\alpha} 
|\vec{\alpha}\rangle\,\mathcal{U}(\vec{\alpha})\,\langle\vec{\alpha}|
\, \dot{=} \,
\begin{bmatrix}
\mathcal{U}(\vec{\alpha}_1) & 0 & \cdots & 0 \\
0 & \mathcal{U}(\vec{\alpha}_2) & \cdots & 0 \\ 
\vdots & \vdots & \ddots & \vdots \\
0 & \cdots & \cdots & \mathcal{U}(\vec{\alpha}_{N_{\alpha}})  \\
\end{bmatrix}
\, .
\end{equation}
%With definition~\eqref{def-DJTS3Sphere} we have introduced the DJT transform,
%which can be represented as a rectangular matrix of size
%$N_\alpha \times N_q$. In theorem~\eqref{thm-DJTUnitarity} we proved
%that it has orthonormal columns,
%i.e.~$(\operatorname{DJT})^\dagger \operatorname{DJT} = \mathrm{1}_{N_q \times N_q}$.
Using the definition~\eqref{def-DJTS3Sphere} of the discrete Jacobi
transform in matrix form with the notation $V = \operatorname{DJT}$
and the property proven in theorem~\eqref{thm-DJTUnitarity},
we can now give our representations of the truncated momentum
operators in the magnetic basis:
\begin{align}
  \label{eq:LaSimilarityTransform}
  L_a = V \hat{L}_a \Vinv \, , \\
  \label{eq:RaSimilarityTransform}
  R_a = V \hat{R}_a \Vinv \, .
\end{align}
$\hat{L}_a$, $\hat{R}_a$ are the matrix representations
of the generators in the electric basis truncated at $j=q$.
They have the following properties, which we anticipated in the introduction:

\begin{proposition} \label{prp-LiealgebraPreserved}
  The truncated momentum operators in the magnetic basis preserve the
  group algebra:
  \begin{equation}
    \label{eq:LieAlgebraCommeRelWanted}
          [L_a, L_b] = i f_{abc} L_c
  \end{equation}
\end{proposition}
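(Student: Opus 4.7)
The plan is to reduce the claim to a straightforward manipulation based on two inputs: the isometry property $V^\dagger V = \mathbf{1}_{N_q\times N_q}$ from \cref{thm-DJTUnitarity}, and the fact that the truncated electric-basis generators $\hat{L}_a$ already obey the $su(2)$ algebra exactly.

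First I would justify why $\hat{L}_a$ satisfies the Lie algebra as a matrix identity (not merely in the distributional sense). The electric basis truncated at $j\le q$ decomposes as a direct sum $\bigoplus_{j\le q} \mathcal{H}_j$ where each $\mathcal{H}_j$ carries the standard $(2j+1)$-dimensional irrep acting together with the right quantum number degeneracy. Because $L_a$ commutes with $L^2$, the truncation does not cut any block: $\hat{L}_a$ is block-diagonal in $j$, and within each block it is built from the known matrix elements of \cref{eq:su2L3convention,eq:su2Lpmconvention}. Hence $[\hat{L}_a,\hat{L}_b]=i\epsilon_{abc}\hat{L}_c$ holds as an exact matrix equation on the $N_q$-dimensional truncated electric space.

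Next I would expand the commutator using the defining relation $L_a=V\hat{L}_a V^\dagger$ and exploit the isometry:
\begin{equation*}
  L_a L_b = V\hat{L}_a V^\dagger V\hat{L}_b V^\dagger = V\hat{L}_a\, \mathbf{1}_{N_q\times N_q}\,\hat{L}_b V^\dagger = V\hat{L}_a\hat{L}_b V^\dagger \, .
\end{equation*}
Antisymmetrizing and inserting the electric-basis algebra yields
\begin{equation*}
  [L_a,L_b] = V\,[\hat{L}_a,\hat{L}_b]\,V^\dagger = i\epsilon_{abc}\,V\hat{L}_c V^\dagger = i\epsilon_{abc}L_c \, ,
\end{equation*}
which is exactly the claim (with $f_{abc}=\epsilon_{abc}$ for $\mathrm{SU}(2)$). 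The same argument transplants to $R_a$.

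The only delicate step is the use of $V^\dagger V = \mathbf{1}_{N_q\times N_q}$ in the middle of the product: had we only the other direction $VV^\dagger$ (which is merely a projector onto the physical subspace of the $N_\alpha$-dimensional magnetic Hilbert space), the inner cancellation would fail and the Lie algebra would close only up to projector insertions. So the whole proposition hinges on the DJT being an isometry from the electric truncation into the magnetic partitioning; this is precisely why \cref{thm-DJTUnitarity} is the essential prerequisite, and why the partitioning sizes $N_\theta, N_\phi, N_\psi$ must be chosen as in \cref{eq:Nalpha.from.q}. No further lemmas or computations should be required beyond this observation.
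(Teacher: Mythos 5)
Your proof is correct and follows essentially the same route as the paper: conjugate by the DJT, use the isometry $V^\dagger V = \mathbf{1}_{N_q\times N_q}$ from \cref{thm-DJTUnitarity} to collapse the products, and invoke the exact $su(2)$ algebra of the truncated electric-basis generators $\hat{L}_a$. Your added justification that the $j\le q$ truncation preserves the algebra blockwise (since $L_a$ commutes with $L^2$), and your remark that $VV^\dagger$ alone would not suffice, are both sound and in fact spell out details the paper leaves implicit.
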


\begin{proof}
  This can be proven using $\Vinv V = \mathrm{1}$:
  \begin{equation}
    \begin{split}
      \label{eq:LieAlgebraCommeRelAfterSimilarityTransform}
            [L_a, L_b] &= L_a L_b - L_b L_a =  
            V \hat{L}_a \Vinv V \hat{L}_b V - 
            V \hat{L}_b \Vinv V \hat{L}_a V \\
            &=
            V \hat{L}_a \hat{L}_b \Vinv - 
            V \hat{L}_b \hat{L}_a \Vinv =
            V [\hat{L}_b, \hat{L}_a] \Vinv \\ 
            &=
            i f_{abc} V \hat{L}_c \Vinv =
            i f_{abc} L_c
    \end{split}
  \end{equation}
  The proof for the $R_a$ is identical.
\end{proof}

\begin{proposition} \label{prp-FirstNqEigenvals}

The first $N_q$ eigenvalues of $\sum L_a^2$ are reproduced exactly,
while the remaining ${N_r = N_\alpha - N_{q}}$ dimensional
subspace belongs to the kernel of $\sum L_a^2$.
\end{proposition}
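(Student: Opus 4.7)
The plan is to exploit that, by Theorem~\ref{thm-DJTUnitarity}, $V=\operatorname{DJT}$ is an $N_\alpha\times N_q$ rectangular matrix with orthonormal columns, so $\Vinv V = \mathrm{1}_{N_q\times N_q}$ while $P := V\Vinv$ is the orthogonal projector onto the $N_q$-dimensional column space $\mathrm{col}(V)\subset\mathbb{C}^{N_\alpha}$. This is the structural fact that simultaneously gives back the correct non-zero spectrum on one subspace and forces zero eigenvalues on the complement.

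First I would compute the Casimir in the magnetic basis:
\begin{equation*}
\sum_a L_a^2 \;=\; \sum_a (V\hat L_a \Vinv)(V\hat L_a\Vinv) \;=\; V\!\left(\sum_a \hat L_a^2\right)\!\Vinv,
\end{equation*}
where the middle $\Vinv V$ collapses to the identity on the electric side. Then I would decompose $\mathbb{C}^{N_\alpha} = \mathrm{col}(V)\oplus \mathrm{col}(V)^\perp$ and treat the two pieces separately.

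For the complement: any $v\in\mathrm{col}(V)^\perp$ satisfies $\Vinv v=0$, hence $(\sum_a L_a^2)\,v=0$. This accounts for $N_r = N_\alpha - N_q$ zero eigenvalues and shows $\mathrm{col}(V)^\perp \subseteq \ker \sum_a L_a^2$. For $\mathrm{col}(V)$: take $u\in\mathbb{C}^{N_q}$ an eigenvector of $\sum_a \hat L_a^2$ with eigenvalue $j(j+1)$ (these are the electric-basis eigenvectors $|j,m_L,m_R\rangle$ with $j\leq q$, which by Eq.~\eqref{eq:su2IrrepsOperators} diagonalise the truncated Casimir, so the $N_q$ eigenvalues $j(j+1)$ with degeneracies $(2j+1)^2$ are exactly the continuum ones up to $j=q$). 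Then $v:=Vu\in\mathrm{col}(V)$ and, using $\Vinv V=\mathrm{1}$,
\begin{equation*}
\left(\sum_a L_a^2\right)v \;=\; V\!\left(\sum_a\hat L_a^2\right)\Vinv V u \;=\; j(j+1)\,Vu \;=\; j(j+1)\,v,
\end{equation*}
so every eigenpair of $\sum_a\hat L_a^2$ lifts isometrically (since $\|Vu\|=\|u\|$ by orthonormality of the columns) to an eigenpair of $\sum_a L_a^2$ with the same eigenvalue.

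Combining the two cases, the spectrum of $\sum_a L_a^2$ on $\mathbb{C}^{N_\alpha}$ consists of the $N_q$ exact continuum eigenvalues $\{j(j+1)\}_{j\leq q}$ on $\mathrm{col}(V)$ together with $N_r$ zero eigenvalues on $\mathrm{col}(V)^\perp$, establishing the claim. There is no genuine obstacle here once Theorem~\ref{thm-DJTUnitarity} is in hand; the only thing to be careful about is not to mistake $L_a=V\hat L_a \Vinv$ for a similarity transformation — $V$ is rectangular, so one has $\Vinv V=\mathrm{1}$ but $V\Vinv=P\neq \mathrm{1}$, and it is precisely this asymmetry that produces the kernel on which a penalty term (item (d) of Sec.~\ref{sec-MainResults}) can later be added.
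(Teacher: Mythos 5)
Your proposal is correct and follows essentially the same route as the paper: both rest on Theorem~\ref{thm-DJTUnitarity} ($\Vinv V = \mathrm{1}_{N_q\times N_q}$) to show the columns of $V$ (equivalently, the lifts $Vu$ of electric-basis eigenvectors) carry the exact eigenvalues $j(j+1)$, and on rank--nullity for $\Vinv$ to place the residual $N_r$-dimensional subspace in the kernel. Your packaging via the operator identity $\sum_a L_a^2 = V\bigl(\sum_a \hat L_a^2\bigr)\Vinv$ and the decomposition $\mathrm{col}(V)\oplus\mathrm{col}(V)^\perp$ is a slightly cleaner rendering of the same argument, and it correctly avoids the paper's inessential identification of the kernel with sampled $j>q$ Wigner functions.
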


\begin{proof}
This is true because the columns of $V$ are eigenvectors of
$\sum L_a^2=\sum R_a^2$, $L_3$ and $R_3$.\\
Without loss of generality, we prove this only for $\sum L_a^2$ and
$L_3$ since for the $R_a$ the steps are identical. If $a$ is the
checkerboard index of $(j_a, {m_L}_a, {m_R}_a)$, the vector
$\vec{v}^{j_a}_{{m_L}_a {m_R}_a} = \vec{v}_a$ with components
$V^i_a\, , \, i=1,\ldots,N_\alpha$ satisfies:
\begin{align}
  \label{eq:LSquaredvEigenstateProof}
  \begin{split}
    \left[ \left(\sum_b L_b^2\right) \vec{v}_k\right]^i 
    &=
    \sum_{a_1, a_2, k}
        {V}^{i}_{a_1} 
        \left(\sum_b \hat{L}_b^2 \right)_{a_1 a_2} 
             {\Vinv}^{k}_{a_2} {V}^{k}_a=   
             \sum_{a_1, a_2} 
             V^{i}_{a_1} 
             j_{a_1} (j_{a_1} + 1) \delta_{a_1 a_2}
             \delta_{a_2 a} \\
             &=
             j_a(j_a + 1) [\vec{v}_k]^{i} 
  \end{split}
\end{align} and
\begin{equation}
  \label{eq:L3vEigenstateProof}
  \left[ L_3 \vec{v}_k\right]^i =
  \sum_{a_1, a_2, k}
      {V}^{i}_{a_1} 
      (L_3)_{a_1 a_2} 
      {\Vinv}^{k}_{a_2} {V}^{k}_a =  
      \sum_{a_1, a_2} 
      V^{i}_{a_1} 
      m_{a_1} \delta_{a_1 a_2} \delta_{a_2 a} =
      {m_L}_a [\vec{v}_k]^{i}
\end{equation}
Finally, from the rank-nullity theorem~\cite{Alama+2008+137+142} for $\Vinv$,
there exist $N_r$ states ${\{|r_k\rangle\}_{k=1,\ldots,N_r}}$ such
that $\Vinv |r_k \rangle = \vec{0}$. The $|r_k \rangle$ are the
$N_r$ Wigner functions with $j > q$, and satisfy:
\begin{equation}
  \label{eq:LaRaTrivialAboveq}
  L_a \vec{v}^{j}_{m_L, m_R} = R_a \vec{v}^{j}_{m_L, m_R}  
  = \vec{0} \, , \qquad j > q \, .
\end{equation}
\end{proof}
We remark that the residual $N_r$ states behave like the vacuum, but are not the same
as $|0\rangle$, since the action of $U$ will not necessarily mix
them with $j=1/2$ only.
Fig.~\eqref{fig:spectrum} shows explicitly how our implementation reproduces exactly the $N_q$ eigenvalues of the $L_a$ and $\sum_a L_a L_a$.
\begin{figure}
  \includegraphics[width=0.5\textwidth]{./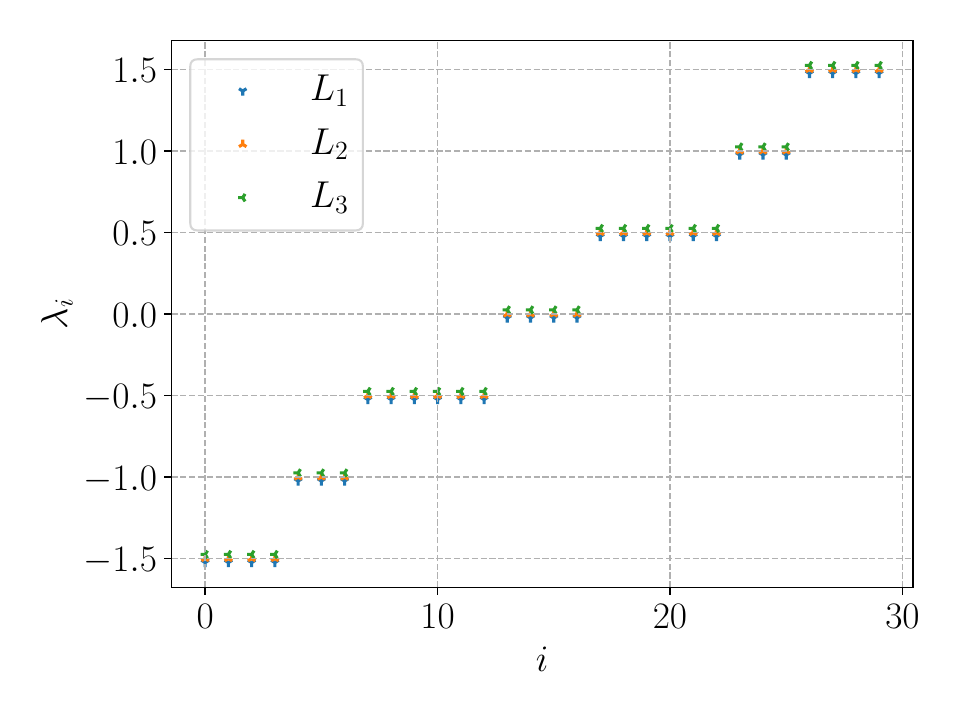}
\includegraphics[width=0.5\textwidth]{./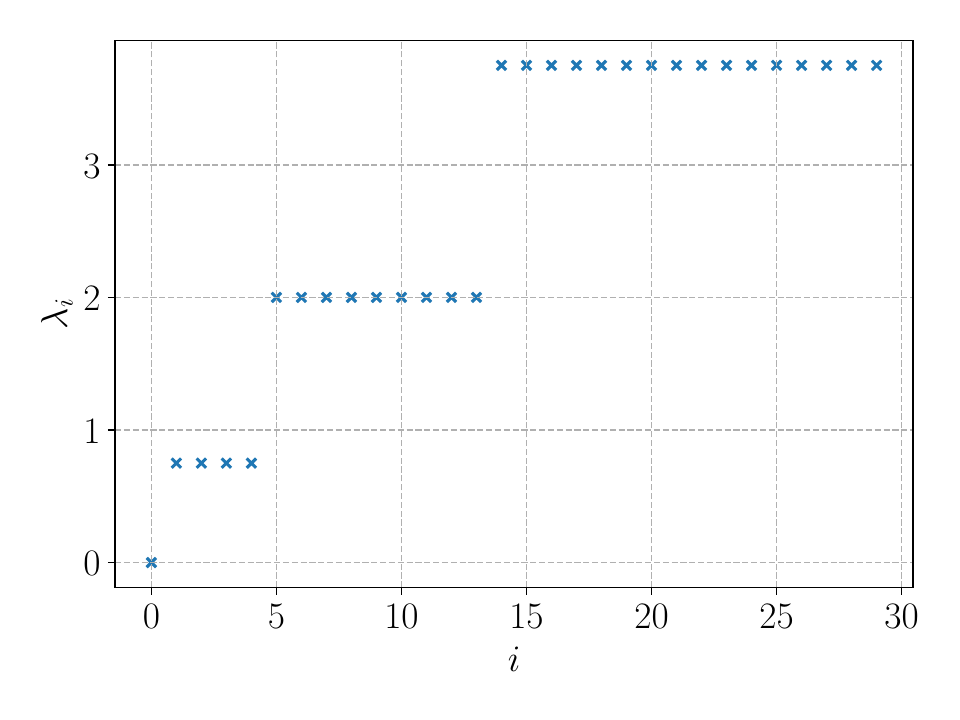}
\caption{
  Numerical spectrum of the canonical momenta obtained with the DJT,
  limited to the first $N_q$ eigenstates with $j \leq q$. 
  For purely illustrative purposes we consider $q=3/2$.
  The horizontal axes are label indices, while the eigenvalues $\lambda_i$ are on the vertical axes.
  Left panel: spectrum of the $L_a$. The eigenvalues are slightly shifted as a function of $a$ for better visualization,
  but are equal to $m_L = -j,\ldots,j$ (for each $j$) up to machine precision.
  Right panel: spectrum of $\sum_a L_a L_a$. 
  We reproduce exactly the eigenvalues $j(j+1)$, for $j=0,\ldots,q$.
  The data have been produced with the implementation of Ref.~\cite{DJTPaperRepo}.
  }
\label{fig:spectrum}
\end{figure}

Let us now discuss the eigenstates of the discretised operators. 

\begin{proposition}
\label{prp-FirstNqEigenstates}

Consider  the the first (linearly independent) $N_q$ eigenstates of $\sum_a L_a L_a$, $L_3$, $R_3$. 
As $q \to \infty$, they approach the naive discretization of the eigenfunctions of the continuum manifold operators,
namely the Wigner $D$-functions.

\end{proposition}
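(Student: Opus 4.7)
The plan is to work directly from the explicit formula for the DJT matrix and show that its columns --- which, by proposition~\eqref{prp-FirstNqEigenvals}, are precisely the eigenstates in question --- are already, by construction, the naive sampling of the continuum Wigner $D$-functions weighted by the square root of the discretised Haar measure. The convergence statement then reduces to a standard quadrature result.

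First I would recall from definition~\eqref{def-DJTS3Sphere} that the component of the eigenvector labelled by $(j,m_L,m_R)$ at the partitioning point $\vec\alpha_k=(\theta_s,\phi_b,\psi_c)$ is
\begin{equation*}
  V^{k}_{(j,m_L,m_R)} = (j + 1/2)^{1/2} \sqrt{\frac{w_s}{N_\phi N_\psi}}\, D^{j}_{m_L,m_R}(\vec\alpha_k)\,,
\end{equation*}
so each component is the value of $D^{j}_{m_L,m_R}$ at the sample point times a $k$-dependent weight. The next step is to identify this weight as the square root of the discretised Haar measure. In the Euler-angle parametrisation the Haar measure is proportional to $\sin\theta\,d\theta\,d\phi\,d\psi = -d(\cos\theta)\,d\phi\,d\psi$: the equally-spaced nodes in $\phi,\psi$ carry weights $4\pi/N_\phi$ and $4\pi/N_\psi$, while the Gauss--Legendre weights $w_s$ are exactly the quadrature weights for $\int_{-1}^{1} dx$. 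Together these form a discrete cell volume $w_s\,(4\pi)^{2}/(N_\phi N_\psi)$ which, up to the overall group volume, reproduces the Haar measure; consequently $(j+1/2)^{1/2}\sqrt{w_s/(N_\phi N_\psi)}$ is precisely the $L^2$-normalisation of $D^{j}_{m_L,m_R}$ against this discrete measure, matching the $(2j+1)^{1/2}$ continuum normalisation.

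With the eigenvector thus identified as the weighted sampling of $D^{j}_{m_L,m_R}$, the limit $q\to\infty$ follows from standard facts about the partitioning. The node counts $N_\theta, N_\phi, N_\psi$ grow with $q$ (cf.~Eq.~\eqref{eq:Nalpha.from.q}); the uniformly spaced points become dense in $[0,4\pi]$; and the Gauss--Legendre nodes become dense in $[-1,1]$, with the associated discrete measure converging weakly to Lebesgue measure. Hence the Riemann-like sums built from $V^{k}_{(j,m_L,m_R)}$ converge to the corresponding Haar integrals of $D^{j}_{m_L,m_R}$, so the discrete eigenvectors converge to the naive sampling of the normalised continuum eigenfunctions, in the same distributional sense as in Eq.~\eqref{eq:LUCommutatorDistributionalSense} earlier in the paper.

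The main obstacle I anticipate is making the word ``approach'' precise: at the nodes themselves the DJT components are already exactly $D^{j}_{m_L,m_R}$ up to a known weight, so there is no finite-$q$ approximation there. The content of the proposition is that matrix elements and inner products of the discrete eigenvectors match the continuum ones in the limit, which requires invoking convergence of the Gauss--Legendre quadrature for smooth integrands together with Riemann-sum convergence on the $\phi,\psi$ circles. Both are standard, but they deserve a careful statement because at finite $q$ the weight $\sqrt{w_s/(N_\phi N_\psi)}$ is non-uniform in $k$ and only averages out to the Haar measure in the $q\to\infty$ limit.
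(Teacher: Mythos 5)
Your proposal is correct and takes essentially the same route as the paper: the paper likewise reads off the DJT columns as measure-weighted samplings of the Wigner $D$-functions and invokes its lemma~\eqref{lem-wsDistribution} (the asymptotics $\theta_s \sim \pi s / n$ and $w_s \sim \frac{\pi}{n}\sin\theta_s$ of the Gauss--Legendre nodes and weights) to conclude that, up to normalization, the columns tend to the continuum eigenfunctions sampled on the partitioning. Your identification of $\sqrt{w_s/(N_\phi N_\psi)}$ as the square root of the discretised Haar measure, together with weak convergence of the quadrature measure, is just a slightly more explicit phrasing of that same step.
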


\begin{proof}

The columns of $V$ are not only some eigenstates with
the correct eigenvalues (with Eq.~\eqref{eq:LaSimilarityTransform} this
would be the case for any invertible matrix $V$), but from lemma
(\ref{lem-wsDistribution}) we know that in the $q \to \infty$ limit
their components are the values of the Wigner $D$-functions stacked
into a vector of $N_\alpha$ components. From the continuum manifold
formalism we also know that the electric field operators are represented
by differential operators~\cite{Shnir:2005xx,Tsuchiya:2020hur}
and the eigenfunctions corresponding to the states $|j, m_L, m_R\rangle$
of \Crefrange{eq:su2IrrepsOperators}{eq:su2Rpmconvention} are indeed the Wigner
$D$-functions:
\begin{equation}
  \label{eq:wavefunctionsAreDfunctions}
  \langle \vec{\alpha} | j, m_L, m_R \rangle\ =\
  \langle \theta, \phi, \psi | j, m_L, m_R \rangle\ \propto\
  D^{j}_{m_L m_R}(\theta, \phi, \psi) \, .
\end{equation}
This implies that, up to a normalization factor, for $q \to \infty$ the
columns of $V$ become the continuous manifold eigenfunctions sampled at the points of the $S_3$ partitioning.

The linear independence (at any $q$) follows from the fact that they are eigenvectors
with distinct eigenvalues.

\end{proof}
Finally, we are able to also show that the canonical commutation
relations are exactly reproduced on a subspace of the discretised
Hilbert space. 

\begin{proposition}
\label{prp-FirstNqprimeCommRel}

The canonical commutation relations are reproduced exactly on the first
${N_{q'} = N_{q-1/2}}$ eigenvectors of $\sum L_a^2=\sum_a R_a^2$,
$L_3$ and $R_3$.

\end{proposition}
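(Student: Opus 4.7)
The plan is to show that for each column $V\ket{j, m_L, m_R}$ of $V$ with $j \leq q - 1/2$ and each colour matrix element $(b,c)$ of $U$, the identity
\begin{equation*}
\bigl([L_a, U_{bc}] + (\tau_a)_{bd}\, U_{dc}\bigr)\, V\ket{j, m_L, m_R} \;=\; 0
\end{equation*}
holds. The relation for $R_a$ will follow by an identical argument, with the right magnetic index playing the role of the left one. The core tool is an \emph{intertwining} property between the link operator and the DJT: restricted to the subspace $j \leq q - 1/2$,
\begin{equation*}
U_{bc}\, V \ket{j, m_L, m_R} \;=\; V\, \widehat{U}_{bc}\, \ket{j, m_L, m_R},
\end{equation*}
where $\widehat{U}_{bc}$ denotes the continuum matrix element of $U_{bc}$ written in the electric basis truncated at $j \leq q$.

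To establish the intertwining, I would exploit that $U_{bc}$ is diagonal in the magnetic basis, multiplying by (a sign times) $D^{1/2}_{bc}(\vec{\alpha}_k)$ from Eq.~\eqref{eq:su2UalphaMatrix}, while the column $V\ket{j, m_L, m_R}$ samples the function $D^{j}_{m_L m_R}$ on the partitioning via Def.~\ref{def-DJTS3Sphere}. Their pointwise product is the sampled function $D^{1/2}_{bc}(\vec{\alpha})\, D^j_{m_L m_R}(\vec{\alpha})$, which decomposes via $\mathrm{SU}(2)$ Clebsch--Gordan into Wigner $D$-functions with total angular momentum $J \in \{j-1/2,\, j+1/2\}$. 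For $j \leq q - 1/2$ all such $J$ satisfy $J \leq q$, so the product lies in the band-limited subspace exactly represented by the DJT (Thm.~\ref{thm-DJTUnitarity}). Consequently $V V^\dagger$ acts as the identity on $U_{bc} V\ket{j, m_L, m_R}$, and $V^\dagger U_{bc} V\ket{j, m_L, m_R}$ returns precisely the continuum Clebsch--Gordan coefficients, identifying this block of $V^\dagger U V$ with $\widehat{U}_{bc}$.

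With the intertwining in hand, the canonical commutation relation follows from a short manipulation. Using $L_a V = V \widehat{L}_a$ together with the intertwining,
\begin{equation*}
[L_a, U_{bc}]\, V\ket{j, m_L, m_R} \;=\; V \widehat{L}_a \widehat{U}_{bc}\ket{j, m_L, m_R} - U_{bc}\, V \widehat{L}_a \ket{j, m_L, m_R}.
\end{equation*}
Because $\widehat{L}_a$ preserves $j$, the state $\widehat{L}_a \ket{j, m_L, m_R}$ still has $j \leq q - 1/2$, so the intertwining applies to the second term as well, yielding $V\,[\widehat{L}_a, \widehat{U}_{bc}]\,\ket{j, m_L, m_R}$. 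On the truncated electric basis the continuum identity $[\widehat{L}_a, \widehat{U}_{bc}] = -(\tau_a)_{bd}\, \widehat{U}_{dc}$ holds provided the action of $\widehat{U}$ stays inside $j \leq q$, which is true because $\widehat{U}$ raises $j$ by at most $1/2$ and we start from $j \leq q - 1/2$. A final application of the intertwining converts the right-hand side back into $-(\tau_a)_{bd}\, U_{dc}\, V\ket{j, m_L, m_R}$, as required.

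The main obstacle is the intertwining lemma, which distils the whole argument into the interplay between the DJT's band-limitedness (Thm.~\ref{thm-DJTUnitarity}) and the fact that Wigner $D$-functions form a closed algebra under pointwise multiplication, with $D^{1/2}$ raising the total angular momentum by at most $1/2$. The ``margin'' of $1/2$ between $q - 1/2$ and $q$ is precisely what prevents the product from leaving the representable subspace; it also explains why the statement cannot be extended to the entire $N_q$-dimensional subspace, since at $j = q$ the product $D^{1/2}\cdot D^q$ already contains $D^{q+1/2}$ components that are projected (or aliased) away by the DJT, breaking the commutator.
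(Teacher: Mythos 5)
Your proof is correct, and it rests on the same two pillars as the paper's: the Clebsch--Gordan closure of pointwise products, $D^{1/2}_{bc}\, D^{j}_{m_L m_R} = \sum_{J \leq j+1/2} c_J\, D^{J}_{M_L M_R}$, and the exact representability of all modes with $J \leq q$ guaranteed by theorem~\ref{thm-DJTUnitarity}, with the margin of $1/2$ between $q-1/2$ and $q$ doing exactly the work you identify. The packaging differs, however. The paper proves the statement by combining lemma~\ref{lem-FirstNqprimeAsContinuum} (on the span of the rescaled sample vectors $\vec{w}^{j}_{m_L m_R} = W^{-1}\vec{v}^{j}_{m_L m_R}$, the discretized momenta and the diagonal $U$ act exactly as the continuum differential and multiplication operators) with proposition~\ref{prp-CommRelFromDiffOps} (those continuum operators satisfy the canonical commutation relations), i.e.\ it stays in the magnetic, wavefunction-and-Lie-derivative picture; you instead work algebraically in the electric basis, establishing the intertwining $U_{bc}\,V = V\,\hat{U}_{bc}$ on the $j \leq q-1/2$ block and then verifying $[\hat{L}_a, \hat{U}_{bc}] = -(\tau_a)_{bd}\,\hat{U}_{dc}$ there directly, never invoking the differential-operator representation. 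Your route is self-contained and arguably cleaner; the paper's route makes the ``discrete equals continuum on a subspace'' statement explicit and reusable (it is recycled verbatim for the finite-difference construction of App.~\ref{sec-FiniteDifferences}). One bookkeeping point you gloss over: the columns of $V$ are not raw samples of $D^{j}_{m_L m_R}$ but carry the factors $(j+1/2)^{1/2}\sqrt{w_s/(N_\phi N_\psi)}$ of definition~\ref{def-DJTS3Sphere}; this is precisely what the paper's matrix $W$ absorbs. Your intertwining nevertheless survives, because $U_{bc}$ is diagonal in the magnetic basis, so the diagonal weight factors commute through, and the $(j+1/2)^{1/2}$ normalizations are exactly those of the orthonormal continuum states, so that $V^\dagger U_{bc} V$ returns the continuum matrix elements $\sqrt{(j+1/2)/(J+1/2)}\,c_J$ rather than the bare coefficients $c_J$ --- a normalization worth stating explicitly, though it does not affect the conclusion.
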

The proof follows immediately from
the lemmas~\eqref{lem-FirstNqprimeAsContinuum} and prop.~\eqref{prp-CommRelFromDiffOps},
by replacing $\psi(U)$ by an element $\vec{\Phi}$ in the space spanned by the first
$N_{q'}$ vectors $\vec{w}^j_{m_L m_R}$ of lemma~\eqref{lem-FirstNqprimeAsContinuum}.

\noindent Let us make the following remarks:
\begin{enumerate}
\def\labelenumi{\arabic{enumi}.}
\item
  The states with $q < j \leq q'$ have the correct eigenvalues but
  don't fullfill the canonical commutation relations.
\item
  We can always project the residual $N_r = N_\alpha-N_q$ (or
  $N_{r'} = N_\alpha - N_{q'}$) states to whatever energy
  above the cutoff of the theory. If $P_g$ is
  the projector to this ``garbage space'', this is implemented as
  ${L_a \to L_a + \kappa P_g}$, where: 
  \begin{equation}
    \label{eq:ProjectorOutGarbageSpace}
    P_g = \sum_{j > q_t} 
    | j, m_L, m_R \rangle \langle j, m_L, m_R | 
    \, \dot{=} \,
    \sum_{j > q_t} 
    \vec{v}^{j}_{m_L, m_R} \left( {\vec{v}}^{j}_{m_L, m_R} \right)^\dagger 
    \, ,
    \end{equation}
    for some $\kappa \gg 1$ and target truncation $q_t$. 
\item
  One can show (e.g.~by induction) that $N_r = N_\alpha-N_q$ is always
  even. For, if $P_g$ projects to the first $N_r$ states, we can
  also preserve the Lie algebra while projecting above the cutoff:
  \begin{equation}
    L_a \to L_a + 
    P_g \left[ 
   \mathds{1}_{N_q \times N_q} \oplus
   (\tau_a \otimes \mathds{1}_{\frac{N_r}{2} \times \frac{N_r}{2}})
   \right] P_g.
    \end{equation}
\end{enumerate}

\section{Conclusion and outlook}\label{sec-conclusion}

In this paper we have discussed a specific approach to the digitsation
of the $\mathrm{SU}(2)$ lattice gauge theory Hamiltonian,
which is needed for
tensor network or quantum computer based simulations of lattice gauge
theories. 
The digitisation scheme is formulated in a so-called
magnetic basis, where the gauge field operator is diagonal and
unitary, while gauge symmetry is preserved exactly on a subspace of
the truncated Hilbert space. This comes at the price of a dense matrix
representation for the canonical momentum operators.

The approach is based on specific partitioning of the sphere
$S_3\cong\mathrm{SU}(2)$ and a discrete Jacobi transform with the main
property that the $N_q$ continuum eigenfunctions with main quantum
number $j \leq q$ of the electric part $L^2$ in the Hamiltonian can be
exactly and uniquely represented. The remainig states can be shifted
above an energy cutoff and interpreted as integrated out.

It remains to be seen whether this formulation performs more
efficiently than other formulations on the market, for instance
compared to the proposal from Ref.~\cite{Zache:2023dko}. Most importantly, it
needs to be investigated in how far the residual gauge symmetry breaking
spoils simulation results and renormalisability.

Including fermionic fields in this formulation is unproblematic.
First numerical results have been presented at Lattice 2023 for a
$2$-sites $1+1$ dimensional $\mathrm{SU}(2)$ Schwinger type
model~\cite{romitiLattice2023,garofaloLattice2023,Garofalo:2023zkd}, 
for which the discretization introduced in the previous sections
reproduces the spectum exactly. 

\section*{Acknowledgements}\label{acknowledgements}

We warmly thank A. Crippa, M. Garofalo, T. Hartung, T. Jakobs, K. Jansen, J. Ostmeyer, D. Rolfes and U.J. Wiese for the very
interesting and fruitful discussions on this project.
This work is supported by the Deutsche
Forschungsgemeinschaft (DFG, German Research Foundation) and the  
NSFC through the funds provided to the Sino-German
Collaborative Research Center CRC 110 “Symmetries
and the Emergence of Structure in QCD” (DFG Project-ID 196253076 -
TRR 110, NSFC Grant No.~12070131001).

\begin{appendices}

  \section{$\mathrm{SU}(N_c)$ gauge theories on the lattice}
  \label{sec-SUNcTheoryReview}

  In this section we compile some formulae and derivations that are valid for any $\mathrm{SU}(N_c)$ theory.
  The Hamiltonian formulation of lattice gauge theories is known since
  1975~\cite{PhysRevD.11.395}. The starting
  point is the classical action $S$ on a lattice 
  $\Lambda$ with $d$ dimensions~\cite{gattringer2009quantum}:
  \begin{equation}
    \label{eq:GaugeAction}
    S = \frac{2}{g^2} \sum_x \sum_{\mu=0, \nu<\mu}^{d-1} \operatorname{Re}\operatorname{Tr}[ \mathds{1} - \mathcal{U}_{\mu\nu}(x)]
    \, ,
  \end{equation}
  where $x = (t, \vec{x}) \in \Lambda$, 
  the gauge links $\mathcal{U}$ are $\mathrm{SU}(N_c)$ color matrices in the fundamental representation, 
  and $\mathcal{U}_{\mu \nu}(x)$ is the square plaquette:
  \begin{equation}
  \label{eq:PlaquetteFromGaugeLinks}
  \mathcal{U}_{\mu \nu}(x) = \mathcal{U}_{\mu}(x)\,\mathcal{U}_{\nu}(x+{\mu})\,\mathcal{U}_{\mu}^{\dagger}(x+{\nu})\,\mathcal{U}_{\nu}^{\dagger}(x)\,.
  \end{equation}
  The trace is taken in color space. 
  The links are related to the classical ``gluon'' fields
  $A_\mu^a$ of the continuum by~\cite{chin1985exact}: 
  \begin{equation}
    \label{eq:UmuFromAmu}
    \mathcal{U}_\mu(x) = \exp{(\pm i \, a \, \tau_b A_{\mu}^b(x))} \, , b=1,\ldots,N_c^2 - 1 \, ,
  \end{equation}
  where $a$ is the lattice spacing. The choice of the
  $\pm$ is irrelevant, since it fixes a convention on what we call
  $\mathcal{U}_\mu(x)$ or $\mathcal{U}^\dagger_\mu(x)$. 
  $\tau_b$ are the generators of $su(N_c)$ in the fundamental
  representation, which we normalize as $\operatorname{Tr}\{\tau_a, \tau_b\} = \delta_{ab}$~\footnote{
    For $N_c=2$, $\tau_b = \sigma_b/2$, where
    $\sigma_b$ are the Pauli matrices.
  }.
  Under a gauge transformation $V$
  at every point $x$ the links transform as: 
  \begin{equation}
    \label{eq:UgaugeTransformation}
    \mathcal{U}_\mu(x) \to V(x) \mathcal{U}_\mu(x) V^\dagger(x + \mu) \, ,
  \end{equation}
  leaving the action invariant.
  \\
  \\
  We now quantize the theory.
  For every point $x$ and direction $\mu$, 
  the classical degrees of freedom become operators on the Hilbert space $\mathcal{H}$ 
  generated by all possible configurations of gauge links 
  (cf.~e.g.~\cite{PhysRevD.15.1128,gattringer2009quantum}):
  \begin{equation}
    \label{eq:UBasisDef}
  \ket{\mathcal{U}} =  \bigotimes_{x \in \Lambda} \bigotimes_{\mu=0}^{d-1} \ket{\mathcal{U}_\mu(x)} \, .
  \end{equation}
  The link operators $U_\mu(x)$ are defined as:
  \begin{equation}
    U_\mu(x)  = \int \mathrm{d} \mathcal{U}_\mu(x) \,\, \mathcal{U}_\mu(x) \ket{\mathcal{U}_\mu(x)} \bra{\mathcal{U}_\mu(x)} \, ,
  \end{equation}
  where $\mathrm{d} \mathcal{U}$ is the Haar measure of the group~\cite{PhysRevD.91.054506}.
  The presence of a symmetry implies the existence of hermitian generators~\cite{sakurai1995modern}.
  As pointed out in Ref.~\cite{PhysRevD.11.395}, in
  general $V(x)$ and $V^\dagger(x + {\mu})$ in Eq.~\eqref{eq:UgaugeTransformation} are independent of each other.
  Thus, after quantization, the theory has to be invariant under simultaneous Left and Right
  transformations.
  The latter are generated by the hermitian, traceless~\footnote{A $\mathrm{SU}(N_c)$ transformation has unit determinant.} operators $L_a$ and $R_a$, satisfying:
  \begin{equation}
    \label{eq:LandRCommute}
    [L_a, R_b] = 0 \, , \, \forall a,b = 1,\ldots,N_c^2 -1 \, .
  \end{equation}
  The transformations are defined as:
  \begin{align}
    \label{eq:DLdefinition}
  D_{L_\mu(x)}[\Omega] \ket{\mathcal{U}_\mu(x)} &\vcentcolon= \ket{\Omega \, \mathcal{U}_\mu(x)} \, ,\\
    \label{eq:DRdefinition}
  D_{R_\mu(x)}[\Omega] \ket{\mathcal{U}_\mu(x)} &\vcentcolon= \ket{\mathcal{U}_\mu(x) \, \Omega^\dagger} \, ,
  \end{align}
  where $\Omega = e^{i \omega_a \tau_a}$ and, e.g., $D_{L_\mu(x)} = e^{i \omega_a (L_a)_\mu(x)}$.
  As generator of the representations, the $L_a$ and $R_a$ form an $\mathrm{su}(N_c)$ algebra 
  (with structure constants $f_{abc}$):
  \begin{align}
    \label{eq:LaLieAlgebraCommRel}
    [L_a, L_b] &= i f_{abc} L_c \, , \\
    \label{eq:RaLieAlgebraCommRel}
    [R_a, R_b] &= i f_{abc} R_c \, .
  \end{align}
  Furthermore, 
  using Eqs.~\eqref{eq:DLdefinition},~\eqref{eq:DRdefinition} for infinitesimal transformations, 
  we find the canonical commutation relations: 
  \begin{align}
    \label{eq:LaCanonicalCommRel}
    [L_a, U] &= - \tau_a U  \, , \\
    \label{eq:RaCanonicalCommRel}
    [R_a, U] &= U \tau_a \, .
  \end{align}
  In fact, for instance:
  \begin{equation}
    D_L[\Omega] U D_L^\dagger[\Omega] = 
    \int  \mathrm{d} \mathcal{U} \,\, \mathcal{U} \ket{\Omega \, \mathcal{U}} \bra{\Omega \, \mathcal{U}} =
    \Omega^{\dagger} U \, ,
  \end{equation}
  where we have used the invariance of the Haar measure~\cite{gattringer2009quantum}.
  
  In summary, the basis of Eq.~\eqref{eq:UBasisDef} is redundant for the description of the physical states,
  which are equivalent up to arbitrary products of the following local tranformations:
  \begin{equation}
    \label{eq:LocalGaussTransformations}
    D[\Omega]({x}) = e^{i \omega_a G_a(x)} = \prod_{\mu=0}^{d} D_{L_\mu({x})}[\Omega] D_{R_\mu({x}-\mu)}[\Omega] \, .
  \end{equation}
  The generators read: % (cf.~\cite{SHARATCHANDRA198262}): 
  \begin{equation}
    G_a(x) = \sum_{\mu=0}^{d} {(L_a)}_\mu(x) + {(R_a)}_\mu(x-\mu) \, \quad \forall a=1,\ldots,N_c^2-1 \, ,
  \end{equation}
  and the invariance condition to be imposed on physical states reads:
  \begin{equation}
    \label{eq:GaussLawCondition}
    G_a(x) \ket{\psi}_{\text{phys.}} = 0 \,\, .
  \end{equation}

  \begin{proposition}
  The $L_a$ and $R_a$ are related by:
  \begin{align}
    \label{eq:RandLparallelTransportComponents}
    L_a =  {U}^{(1)}_{ba} R_b  \, , \\
    \label{eq:RandLparallelTransportComponents2}
    R_a = {U}^{(1)}_{ab} L_b \, ,
  \end{align}
  where $U^{(1)}_{ab} = -2 \operatorname{Tr}(\tau_a U^\dagger \tau_b U)$.
  \end{proposition}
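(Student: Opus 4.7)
The strategy is to rewrite the left translation acting on a basis state $|\mathcal{U}\rangle$ as a right translation acting on the same state; this immediately forces a linear relation between the $L_a$ and $R_a$ whose coefficients are exactly the adjoint action of $\mathcal{U}$ on the generators. The starting observation is the purely algebraic identity $\Omega\,\mathcal{U} = \mathcal{U}\,(\mathcal{U}^\dagger\,\Omega\,\mathcal{U})$, which via the definitions~\eqref{eq:DLdefinition}--\eqref{eq:DRdefinition} gives, on every basis state, $D_{L}[\Omega]\,|\mathcal{U}\rangle = |\Omega\,\mathcal{U}\rangle = D_{R}[\mathcal{U}^\dagger\,\Omega^\dagger\,\mathcal{U}]\,|\mathcal{U}\rangle$, since $D_{R}[\Omega']\,|\mathcal{U}\rangle = |\mathcal{U}\,{\Omega'}^\dagger\rangle$.

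Next I would expand the conjugated generator in the $\{\tau_b\}$ basis, $\mathcal{U}^\dagger\,\tau_a\,\mathcal{U} = M_{ab}(\mathcal{U})\,\tau_b$ with $M_{ab}(\mathcal{U}) = 2\,\operatorname{Tr}(\tau_b\,\mathcal{U}^\dagger\,\tau_a\,\mathcal{U})$. Taking $\Omega = \exp(i\,\omega_c\,\tau_c)$ infinitesimal and matching the $O(\omega_a)$ coefficients of both sides, with $D_{L/R}[\Omega] = 1 + i\,\omega_a L_a/R_a + O(\omega^2)$, gives the pointwise identity $L_a\,|\mathcal{U}\rangle = -M_{ab}(\mathcal{U})\,R_b\,|\mathcal{U}\rangle$. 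Because the operator $U^{(1)}_{ba}$ is diagonal in the $|\mathcal{U}\rangle$ basis with eigenvalue precisely $-M_{ab}(\mathcal{U})$ (directly from its definition in the statement), and since the $|\mathcal{U}\rangle$ span the Hilbert space, this is equivalent to the operator identity $L_a = U^{(1)}_{ba}\,R_b$. The second relation then follows by inverting the coefficient matrix: a short computation using the $\mathrm{SU}(N_c)$ Fierz/completeness identity for the $\tau_a$ shows that $U^{(1)}_{ca}\,U^{(1)}_{cb} = \delta_{ab}$, so contracting the first relation with $U^{(1)}_{ca}$ yields $R_c = U^{(1)}_{ca}\,L_a$.

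The main obstacle I foresee is the index and sign bookkeeping that relates $M_{ab}(\mathcal{U})$ to $U^{(1)}_{ab}$ (they differ by a transpose \emph{and} a sign) and that shows the $1/N_c$ piece of the Fierz identity drops out by virtue of $\operatorname{Tr}\tau_a = 0$. As an independent sanity check I would verify the first relation directly from the canonical commutation relations~\eqref{eq:LaCanonicalCommRel}--\eqref{eq:RaCanonicalCommRel}: since $U^{(1)}_{ba}$ is a function of $U$ alone and therefore commutes with $U$, one computes $[U^{(1)}_{ba}\,R_b,\,U] = U^{(1)}_{ba}\,U\,\tau_b = U\,(U^{(1)}_{ba}\,\tau_b) = U\,(-U^\dagger\,\tau_a\,U) = -\tau_a\,U$, which reproduces $[L_a, U] = -\tau_a U$ exactly and corroborates the claim.
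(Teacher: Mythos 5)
Your proof is correct and follows essentially the same route as the paper: the identity $\Omega\,\mathcal{U} = \mathcal{U}\,(\mathcal{U}^\dagger\,\Omega\,\mathcal{U})$ converting a left translation into a right translation on each basis ket (the paper's Eq.~\eqref{eq:DRDLrelation}), followed by infinitesimal matching and extraction of the coefficient matrix via the trace normalization of the traceless $\tau_a$. The only divergence is minor: for the second relation the paper simply repeats the analogous derivation (recording the orthogonality of $U^{(1)}$ only afterward, as a corollary of the two relations), whereas you prove $U^{(1)}_{ca}U^{(1)}_{cb}=\delta_{ab}$ directly from the completeness/Fierz identity and invert the first relation --- a correct shortcut, and your commutator sanity check $[U^{(1)}_{ba}R_b,\,U]=-\tau_a U$ is also valid.
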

  \begin{proof}
  From the action on a generic $\ket{\mathcal{U}_\mu(x)}$:
  \begin{equation}
    \label{eq:DRDLrelation}
    D_{R_\mu(x)}[\Phi] \vcentcolon= 
    D_{R_\mu(x)}[\mathcal{U}^\dagger_\mu(x) \, \Omega^\dagger \, \mathcal{U}_\mu(x)] = 
    D_{L_\mu(x)}[\Omega]
    \, .
  \end{equation}
  We now consider an infinitesimal transformation ${\Omega= e^{i \omega_a \tau_a}}$.
  By construction, also ${\Phi = e^{i \phi_a \tau_a}}$ is infinitesimal. 
  After some algebra we find
  ${
    \phi_a = 
    - 2 i \operatorname{Tr}\left(\tau_a \Phi \right)=
    \mathcal{U}^{(1)}_{ab} \omega_b
  }$.
  Using Eq.~\eqref{eq:DRDLrelation} we get:
  \begin{equation}
    e^{i \phi_a R_a} = e^{i \mathcal{U}^{(1)}_{ab} \omega_b R_a} = e^{i \omega_b L_b}
    \implies
    L_a = \mathcal{U}^{(1)}_{ba} R_b \, .
  \end{equation}
  Eq.~\eqref{eq:RandLparallelTransportComponents} follows from the validity on arbitrary linear combinations of $\ket{\mathcal{U}_\mu(x)}$.
  With analogous steps we also find ${R_a = {U}^{(1)}_{ab} L_b}$.
  \end{proof}
  \begin{remark}
  As a consequence, $U^{(1)}_{ab}$ is an orthogonal matrix and therefore:
  \begin{equation} \label{eq:LsquaredEqualsRsquared}
  \sum_a L_a L_a = \sum_a R_a R_a \, .
  \end{equation}
  Physically, the existence of the $R_a$ is important, as it introduces additional quantum numbers
  twin to the $L_a$'s, with the constraint of Eq.~\eqref{eq:LsquaredEqualsRsquared}.
  \end{remark}

  The Hamiltonian of the system is found using the transfer matrix formalism. 
  In the Weyl gauge ($A_0=0$)~\cite{PhysRevD.15.1128}, one finds the
  standard Wilson Hamiltonian~\cite{PhysRevD.15.1128,PhysRevD.20.2610,PhysRevD.15.1111}:
  \begin{equation}
  \label{eq:OriginalKGHamiltonian}
    H = H_{\text{el}} + H_{\text{mag}}
    = 
    \frac{g^{2}}{2}
    \sum_{\vec{x}} \sum_{\mu=1}^{d-1} \sum_{a=1}^{N_c^2 -1} 
    {(L_a)}_{\mu}^{2}(\vec{x})
    -
    \frac{1}{g^{2}}
    \sum_{\vec{x}} \sum_{\mu=1, \nu<\mu}^{d-1} \,
    \mathrm{Tr}[{U}_{\mu \nu}(\vec{x}) + {U}_{\mu \nu}^\dagger(\vec{x}) ] \, .
  \end{equation}
  The products of fields in Eq.~\eqref{eq:OriginalKGHamiltonian} at
  different lattice points have to be understood as tensor products on $\mathcal{H}$.
  In analogy to QED, $H_{\text{el}}$ and $H_{\text{mag}}$ are called respectively the
  \textit{electric} and \textit{magnetic} part of the Hamiltonian.
  Note that Eqs.~\eqref{eq:LaCanonicalCommRel},~\eqref{eq:RaCanonicalCommRel}, 
  which follow from symmetry only, 
  are consistent with Eq.~\eqref{eq:UmuFromAmu}
  (with the $+$ convention) and the usual equal time canonical commutation
  relations $\allowbreak {[{L_{\mu}}_a(\vec{x}), A^b_\nu(\vec{y})] = -i \delta_{ab} \delta_{\mu \nu} \delta(\vec{x}-\vec{y})}$
  (cf. Refs.~\cite{PhysRevD.19.531,10.1143/PTP.55.1631}). 
  For the above reasons, the $L_a$ and $R_a$ are called \textit{canonical momenta}, and loosely speaking
  we can say that ``the links live in the gauge group, while the momenta
  in the algebra''.
  The geometrical picture of Eq.~\eqref{eq:OriginalKGHamiltonian} is that 
  it is the Hamiltonian of particles moving on $\mathrm{SU}(N_c)$ manifolds (one for each lattice link),
  interacting through the magnetic part of the Hamiltonian (see e.g. Ref.~\cite{Polychronakos_2006,POLYCHRONAKOS2023116314}).
  
  The Weyl gauge trivializes the temporal links ($\mathcal{U}_0(x)=\mathds{1}$),
  leaving a residual gauge redundancy associated with the subset of purely spatial transformations of Eq.~\eqref{eq:LocalGaussTransformations}.
  The electric Hamiltonian is the $su(2)$ quadratic Casimir and hence commutes with the $G_a(\vec{x})$. 
  The magnetic part commutes with the latter thanks to Eqs.~\eqref{eq:LaCanonicalCommRel},~\eqref{eq:RaCanonicalCommRel}. 
  For this reason we can, equivalently, define the theory by simply requiring the following:
  \begin{itemize}
  \item
  The (magnetic) basis of the Hilbert space $\mathcal{H}$ is given by the analogous of Eq.~\eqref{eq:UBasisDef}, limited to spatial links:
  \begin{equation}
    \label{eq:UBasisSpatialDef}
  \ket{\mathcal{U}} =  \bigotimes_{\vec{x}} \bigotimes_{\mu=1}^{d-1} \ket{\mathcal{U}_\mu(\vec{x})} \, ,
  \end{equation}
  \item 
  The $U_\mu(\vec{x})$ are commuting $\mathrm{SU}(N_c)$-valued operators:
  \begin{equation}
    {U_\mu(\vec{x}) \ket{\mathcal{U}} \vcentcolon= \mathcal{U}_\mu(\vec{x}) \ket{\mathcal{U}}}.
  \end{equation}
  \item 
  The $L_a$ and $R_a$ form an $su(N_c)$ algebra, they satisfy Eqs.~\eqref{eq:LandRCommute},~\eqref{eq:LaCanonicalCommRel},~\eqref{eq:RaCanonicalCommRel},
  and are related by:
  \begin{equation}
    \label{eq:RandLparallelTransport}
    R = \tau_a R_a = 
    - U^{\dagger} \tau_a L_a U=
    - U^{\dagger} L U
    \, .
  \end{equation}
  Put another way, $R=\tau_a R_a$ is the quantum analog of the parallel transport of $L = \tau_a L_a$.
  We note that, using the normalization of the $\tau_a$ above, Eq.~\eqref{eq:RandLparallelTransport} gives back Eq.~\eqref{eq:RandLparallelTransportComponents2}.
  \item 
  The space of physical states is the subspace of $\mathcal{H}$ satisfying:
  \begin{equation}
    \label{eq:GaussLawSpatialCondition}
    G_a(\vec{x}) \ket{\psi}_{\text{phys.}} = 
    \left[ \sum_{\mu=1}^{d} {(L_a)}_\mu(\vec{x}) + {(R_a)}_\mu(\vec{x}-\mu) \right] \ket{\psi}_{\text{phys.}} 
    = 0 
    \,\, .
  \end{equation}
  In analogy with QED, this is called \textit{Gauss' law}.
  \end{itemize}
  %
  % With some algebra on Eqs.~\eqref{eq:RandLparallelTransport},~\eqref{eq:LaCanonicalCommRel}, we see that 
  % $\sum_a L_a^2$ and $\sum_a R_a^2$ differ just by a constant $c=(\sum_a \tau_a \tau_a)^2$ 
  % (and hence $U^{(1)}_{ab}$ is no longer an orthogonal matrix).
  % This is however equivalent to the Hamiltonian of Eq.\eqref{eq:OriginalKGHamiltonian}, 
  % as the energy of a system is always defined up to a constant~\cite{griffiths2005introduction,peskin2018introduction}.

\section{Geometrical structure of $\mathrm{SU}(N_c)$}
\label{sec-GeometrysuNc}

In this section we take a look at the canonical momenta from a
gemetrical perspective. 
We work in the wavefunction formalism in the magnetic basis of states $\ket{\mathcal{U}}$:
$U \ket{\mathcal{U}} = \mathcal{U} \ket{\mathcal{U}}$.
Without loss of generality, we discuss only the
construction of the $L_a$. For the $R_a$ the same logic applies.

\begin{lemma}
  \label{lem-XPCommutatorDiffOperator}

Let's consider two operators $X, P$, with the latter acting as a
differential operator $\mathcal{P}$ on the space of wavefunctions of
$X$. 
Then, $[P,X]$ is equivalently given by $\mathcal{P}(x)$ with the replacement $x \to X$.

\end{lemma}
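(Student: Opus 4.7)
The plan is to evaluate the commutator directly in the wavefunction representation built from the eigenstates of $X$, where $X$ acts by multiplication, $(X\psi)(x)=x\,\psi(x)$, and $P$ acts as the differential operator $\mathcal{P}$, $(P\psi)(x)=\mathcal{P}\psi(x)$. Applied to an arbitrary $\psi$, the commutator reads
\begin{equation*}
  \bigl([P,X]\,\psi\bigr)(x)\ =\ \mathcal{P}\bigl(x\,\psi(x)\bigr)\ -\ x\,\mathcal{P}\psi(x)\,,
\end{equation*}
so the whole question reduces to computing $\mathcal{P}(x\,\psi)$ and isolating the piece that does not simply give $x\,\mathcal{P}\psi$ back.

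In the setting of the paper $\mathcal{P}$ is a first-order differential operator (each $L_a$, $R_a$ is a Lie derivative, cf.\ Eqs.~\eqref{eq:L1.diff.op.continuum}--\eqref{eq:L3.diff.op.continuum}), so the Leibniz rule gives $\mathcal{P}(x\,\psi)=x\,\mathcal{P}\psi+\psi\,\mathcal{P}(x)$. Substituting this back, the $x\,\mathcal{P}\psi$ contribution cancels against $-x\,\mathcal{P}\psi$, and I am left with
\begin{equation*}
  \bigl([P,X]\,\psi\bigr)(x)\ =\ \mathcal{P}(x)\,\psi(x)\,,
\end{equation*}
i.e.\ multiplication of $\psi$ by the function $\mathcal{P}(x)$ that is obtained by letting the differential operator act on the coordinate itself.

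To finish, I observe that in the $X$-eigenbasis the multiplication by any function $f$ of the coordinate is operator-equivalent to $f(X)$, since $X$ is precisely the multiplication-by-$x$ operator. Taking $f=\mathcal{P}(x)$ therefore gives $[P,X]=\mathcal{P}(x)\bigl|_{x\to X}$, as stated. The only real subtlety I would stress in the write-up is the implicit first-order assumption on $\mathcal{P}$: for a higher-order differential operator the Leibniz rule would leave behind terms that still act on $\psi$ by differentiation, so $[P,X]$ would itself be a differential operator rather than a purely multiplicative one. The lemma is to be read in the context of App.~\eqref{sec-GeometrysuNc}, where $P$ is always a Lie-derivative-like object and hence first-order, so no such remainder appears.
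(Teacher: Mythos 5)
Your proof is correct and follows essentially the same route as the paper: it evaluates $[P,X]$ on a generic wavefunction $\psi(x)$, applies the Leibniz rule to $\mathcal{P}(x\,\psi)$, cancels the $x\,\mathcal{P}\psi$ terms, and identifies the remainder with multiplication by $\mathcal{P}(x)$, i.e.\ the operator $\mathcal{P}(X)$ in the $X$-eigenbasis. Your explicit observation that this step presupposes $\mathcal{P}$ is first order is a useful sharpening of an assumption the paper leaves implicit (its momenta are Lie derivatives, hence first order, so no differential remainder survives).
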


\begin{proof}

We can prove the commutation relation for $[P, X]$ by writing the
commutator action on a generic element of the eigenfunction basis
$\{\psi(x) = \langle x| \psi\rangle \}$. By hypothesis, in the
wavefunction formalism $P$ is a differential operator on the basis of
$X$, and the commutator is equal to: \begin{equation}
  \label{eq:PXcommutatorOnWavefunctions}
  \begin{split}
    [P,X]  \psi(x) &=
    \mathcal{P}(x\psi(x)) - x \cdot \mathcal{P}(\psi(x)) \\
    &= [\mathcal{P}(x)] \psi(x) + x \cdot \mathcal{P}(\psi(x)) - x \cdot \mathcal{P}(\psi(x)) \\
    &= [\mathcal{P}(x)] \psi(x) = P(X) \psi(x) \, .
  \end{split}
  \end{equation}

\end{proof}

\begin{proposition}
\label{prp-LieDerivativesFaithfulRep}

In the manifold wavefunction formalism, a faithful representation of the
momenta is given by the Lie derivatives $\mathcal{L}_a$~\cite{wald2010general} along the directions
of the Lie algebra generators.
Namely, the $\mathcal{L}_a$ form an $su(N_c)$ Lie algebra and  $\mathcal{L}_a \mathcal{U} = - \tau_a \mathcal{U}$.

\end{proposition}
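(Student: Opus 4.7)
The plan is to define the $\mathcal{L}_a$ explicitly as the infinitesimal generators of left translations on the group manifold and then verify the two stated properties directly. For smooth $\psi:\mathrm{SU}(N_c)\to\mathbb{C}$ I would set
\[
  \mathcal{L}_a \psi(\mathcal{U}) \vcentcolon= -i\,\frac{\mathrm{d}}{\mathrm{d}t}\bigg|_{t=0}\psi(e^{-it\tau_a}\mathcal{U}),
\]
so that $\mathcal{L}_a$ is $-i$ times the Lie derivative along the right-invariant vector field associated to $\tau_a$. This choice is dictated by the convention $D_L[\Omega]\ket{\mathcal{U}}=\ket{\Omega\mathcal{U}}$ of Eq.~\eqref{eq:DLdefinition}, so that $\mathcal{L}_a$ generates the same left translation as $L_a$ does in the operator formalism.

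First I would verify $\mathcal{L}_a\mathcal{U}=-\tau_a\mathcal{U}$ by applying the definition to the identity-valued matrix function $\mathcal{U}\mapsto\mathcal{U}$: differentiating the curve $t\mapsto e^{-it\tau_a}\mathcal{U}$ pulls down a factor $-i\tau_a$, which combined with the explicit $-i$ in the definition yields $-\tau_a\mathcal{U}$. This is essentially a one-line calculation.

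Next, for the Lie algebra closure, I would compute $[\mathcal{L}_a,\mathcal{L}_b]\psi(\mathcal{U})$ by expanding the difference $\psi(e^{-is\tau_b}e^{-it\tau_a}\mathcal{U})-\psi(e^{-it\tau_a}e^{-is\tau_b}\mathcal{U})$ to order $st$ with the Baker--Campbell--Hausdorff formula. The order-$st$ contribution to the exponent is governed by $[\tau_a,\tau_b]=if_{abc}\tau_c$, and applying $\partial_t\partial_s|_0$ to the residual single exponential produces $if_{abc}\mathcal{L}_c\psi(\mathcal{U})$. A more abstract equivalent is to invoke the standard result that right-invariant vector fields on a Lie group close into a Lie algebra isomorphic to the one at the identity.

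Faithfulness then follows directly: if some combination $c_a\mathcal{L}_a$ vanished as an operator, applying it to $\mathcal{U}$ would give $c_a\tau_a\mathcal{U}=0$, which by linear independence of the generators forces $c_a=0$. The main obstacle I anticipate is not conceptual but rather the careful tracking of left/right and sign conventions: one must verify that $e^{-it\tau_a}$ acting by left multiplication (rather than $e^{+it\tau_a}$ or a right multiplication) is what reproduces $L_a$ as opposed to $R_a$, so that the sign in $\mathcal{L}_a\mathcal{U}=-\tau_a\mathcal{U}$ matches Eq.~\eqref{eq:LaCanonicalCommRel} through lemma~\eqref{lem-XPCommutatorDiffOperator}.
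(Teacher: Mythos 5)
Your proposal is correct and takes essentially the same route as the paper's proof: both define $\mathcal{L}_a$ as the infinitesimal generator of the left-translation flow $\mathcal{U}\mapsto e^{-i\omega\tau_a}\mathcal{U}$ (matching the convention of Eq.~\eqref{eq:DLdefinition}) and obtain $\mathcal{L}_a\mathcal{U}=-\tau_a\mathcal{U}$ by the same one-line differentiation, while for the closure $[\mathcal{L}_a,\mathcal{L}_b]=if_{abc}\mathcal{L}_c$ the paper cites theorem 13.6 of Lee where you verify it directly via a Baker--Campbell--Hausdorff expansion, which is an equivalent (if more hands-on) argument, and your explicit faithfulness check ($c_a\mathcal{L}_a=0$ applied to $\mathcal{U}$ forces $c_a=0$) is a small addition the paper leaves implicit. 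One minor caution on your abstract fallback: under standard conventions the right-invariant vector fields close into an algebra \emph{anti}-isomorphic to the Lie algebra at the identity, and it is precisely the extra sign in $e^{-i\omega\tau_a}$ that restores the isomorphism --- your concrete BCH computation gets this right, so nothing in the argument actually fails.
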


\begin{proof}

The action of a vector field $\dummy{L}_a$ on a differentiable
function $f: \mathrm{SU}(N_c) \to \mathbb{C}$ coincides with the Lie derivative
(see, e.g., appendix C of Ref.~\cite{wald2010general}):
\begin{equation} \label{eq:momenta.Lie.derivatives}
\mathcal{L}_{a} f(p) = \dummy{L}_a(f) |_p = -i \frac{\mathrm{d}}{\mathrm{d} \omega} f(\phi^{\ell_a}_\omega(p)) |_{\omega=0} \, ,
\end{equation}
where $\phi^{\ell_a/r_a}_\omega$ defines the Lie group
flow along the direction of the abstract generators $\ell_a$ in
$su(N_c)$, starting on the point $p$. 

From theorem 13.6 of Ref.~\cite{lee2012smooth} we have
${\mathcal{L}_a \dummy{L}_b = [\dummy{L}_a, \dummy{L}_b}]$, therefore the $\mathcal{L}_a$ form an $su(N_c)$ Lie algebra (cf.~Eqs.~\eqref{eq:LaLieAlgebraCommRel},~\eqref{eq:RaLieAlgebraCommRel}).
Finally, in the particular case $f(p)=\mathcal{U}(p)$ we get:
\begin{equation}
\mathcal{L}_{a} \mathcal{U} = -i \frac{\mathrm{d}}{\mathrm{d} \omega} (e^{-i \omega \tau_a} \mathcal{U})|_{\omega=0} = - \tau_a \mathcal{U}
\end{equation}

\end{proof}

\begin{remark}
Using lemma~\eqref{lem-XPCommutatorDiffOperator}, the we see that the $\mathcal{L}_{a}$ lead to the right canonical commutation relations 
of. Eqs.~\eqref{eq:LaCanonicalCommRel},~\eqref{eq:RaCanonicalCommRel}.
\end{remark}

\begin{proposition}
  \label{prp-LieDerivativesComponentsMaurerCartan}
  Let $L_a$ be the vector fields on the  manifold $\mathrm{SU}(N_c)$ defined by:
  \begin{equation}
    \dummy{L}_a \mathcal{U} = -\tau_a \mathcal{U} \, . 
  \end{equation}
  $\dummy{L}_a$ acts componentwise on $\mathcal{U}$.
  Consider now a chart of $\mathrm{SU}(N_c)$, with a set of coordinates $x^k \, (k=1,\ldots,N_c^2-1)$.
  The vector fields $\dummy{L}_a$ read:
    \begin{equation}
    \dummy{L}_a = \dummy{L}_a^k \frac{\partial}{\partial x^k} \, .
    \end{equation}
  Let also $m$ be the Maurer-Cartan 1-form (see e.g.~appendix A of Ref.~\cite{Shnir:2005xx}):
  \begin{equation}
    m = \mathcal{U}^{-1} d\mathcal{U} 
    = \mathcal{U}^{-1} \frac{\partial \mathcal{U}}{\partial x^k} \mathrm{d}x^k 
    = m_k \mathrm{d}x^k \, ,
  \end{equation}
  where $m_k = i c^a_k \tau_a \in su(N_c)$ (see e.g.~Sec.~A.1.4 of Ref.~\cite{gattringer2009quantum}).
  Then, the components $\dummy{L}_a^k$ are the solutions of the
  following linear system: 
  \begin{equation}
    \dummy{L}_a^k c^b_k = 2 i \operatorname{Tr} \left[\mathcal{U}^{-1} \tau_a \mathcal{U} \tau_b \right] \, , \, \forall a,b = 1, \ldots, N_c^2-1 \, .
  \end{equation}
  \end{proposition}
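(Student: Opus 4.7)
The plan is to start from the defining equation of the vector fields and translate it into coordinate form by using the chart coordinates on $\mathrm{SU}(N_c)$. Applying $\dummy{L}_a$ componentwise to $\mathcal{U}$ gives
\begin{equation}
\dummy{L}_a^k \, \frac{\partial \mathcal{U}}{\partial x^k} \;=\; -\tau_a \mathcal{U}\,,
\end{equation}
which is a matrix-valued equation in the fundamental representation of $\mathrm{SU}(N_c)$.

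Next, I would left-multiply both sides by $\mathcal{U}^{-1}$ so as to produce the Maurer--Cartan $1$-form on the left: $\mathcal{U}^{-1} \partial_k \mathcal{U} = m_k = i c_k^b \tau_b$. This converts the equation into an identity valued in the Lie algebra $su(N_c)$,
\begin{equation}
i\,\dummy{L}_a^k c_k^b\, \tau_b \;=\; -\,\mathcal{U}^{-1}\tau_a \mathcal{U}\,.
\end{equation}
Note that the right-hand side is indeed in $su(N_c)$ because conjugation of a generator by a group element maps $su(N_c)$ to itself, hence it can be expanded on the $\tau_b$ basis.

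To project onto the individual $\tau_b$ components I would multiply by $\tau_b$ and take a trace, using the normalization $\operatorname{Tr}(\tau_a \tau_b) = \tfrac{1}{2}\delta_{ab}$ stated in App.~\eqref{sec-SUNcTheoryReview}. This yields $\tfrac{i}{2}\dummy{L}_a^k c_k^b = -\operatorname{Tr}(\mathcal{U}^{-1}\tau_a \mathcal{U}\tau_b)$, equivalent to the claimed linear system after multiplying by $-2i$. The non-singularity of the coefficient matrix $c_k^b$ (so that the $\dummy{L}_a^k$ are uniquely determined) follows from the invertibility of the Maurer--Cartan form on any chart where the coordinates are well-defined.

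The only subtle point will be bookkeeping of factors and signs in the normalization of $\tau_a$ and of the Maurer--Cartan components $c_k^b$; once the conventions of App.~\eqref{sec-SUNcTheoryReview} are used consistently, the identification is immediate, so I do not expect any real obstacle beyond a careful one-line trace computation.
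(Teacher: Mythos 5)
Your proposal is correct and follows essentially the same route as the paper: both reduce the defining relation $\dummy{L}_a^k\,\partial_k \mathcal{U} = -\tau_a\mathcal{U}$ to the Lie-algebra identity $i\,\dummy{L}_a^k c^b_k \tau_b = -\mathcal{U}^{-1}\tau_a\mathcal{U}$ (the paper phrases this as applying $\dummy{L}_a$ to the Maurer--Cartan form $m$, which is the same computation) and then project onto the $\tau_b$ basis by tracing with the normalization $\operatorname{Tr}\{\tau_a,\tau_b\}=\delta_{ab}$, i.e.\ $\operatorname{Tr}(\tau_a\tau_b)=\tfrac{1}{2}\delta_{ab}$. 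Your added remark on the invertibility of the coefficient matrix $c^b_k$ is a small bonus the paper leaves implicit.
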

  
  \begin{proof}

    The proof follows immediately by applying the vector fiels to the Maurer-Cartan form. In fact, from the normalization
    $\frac{\partial}{\partial x^i} (\mathrm{d}x^k) = \delta_i^k$, we get:
    \begin{align}
      \dummy{L}_a(m) &= i \dummy{L}_a^k c^b_k \tau_b  \, ,
      \\
      \dummy{L}_a(m) &=  \mathcal{U}^{-1} \dummy{L}_a^k \frac{\partial \mathcal{U}}{\partial x^k} = \mathcal{U}^{-1} \dummy{L}_a \mathcal{U} = - \mathcal{U}^{-1} \tau_a \mathcal{U} \, .
    \end{align} 
    Thus, by using $\operatorname{Tr}(\{\tau_a, \tau_b\}) = \delta_{ab}$,
    we get the aforementioned linear system with the $\dummy{L}_a^k$ as solutions: 
    \begin{equation}
      \dummy{L}_a^k c^b_k = 2 i \operatorname{Tr} \left[\mathcal{U}^{-1} \tau_a \mathcal{U} \tau_b \right] \, , \, \forall a,b = 1, \ldots, N_c^2-1 \, .
    \end{equation}
  \end{proof}
  
\begin{remark}

For $\mathrm{SU}(2)$, the $\dummy{L}_a$ are the Killing vectors of the
Atiyah-Hitchin metric~\cite{GIBBONS1986183}.
Their form is given by Eqs.~\eqref{eq:L1.diff.op.continuum},
\eqref{eq:L2.diff.op.continuum},~\eqref{eq:L3.diff.op.continuum}, and
$\sum_a \dummy{L}_a \dummy{L}_a$ is the Laplace-Beltrami operator on
$S_3$~\cite{chin1985exact}.

\end{remark}

We now look at the particular case of $\mathrm{SU}(2)$.
\begin{proposition}
\label{prp-CommRelFromDiffOps}
  
  The differential operators of Eqs.~\eqref{eq:L1.diff.op.continuum},~\eqref{eq:L2.diff.op.continuum},~\eqref{eq:L3.diff.op.continuum} 
  give rise to the canonical commutation relations of Eq.~\eqref{eq:CanonicalCommRel}.
  
  \end{proposition}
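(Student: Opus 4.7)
The plan is to reduce the claim to a direct computation using the two preceding technical results: Lemma~\eqref{lem-XPCommutatorDiffOperator} (which converts commutators of a multiplicative and a differential operator into an application of the differential operator to the multiplicative symbol) and Proposition~\eqref{prp-LieDerivativesFaithfulRep} (which asserts that the Lie derivatives along the generator directions satisfy $\mathcal{L}_a \mathcal{U} = -\tau_a \mathcal{U}$). The strategy is therefore to verify by an explicit coordinate calculation that the concrete differential operators in Eqs.~\eqref{eq:L1.diff.op.continuum}--\eqref{eq:L3.diff.op.continuum} are in fact those Lie derivatives, and then invoke the general lemma.

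First I would write $\mathcal{U}(\vec\alpha)$ in the explicit $2\times 2$ form given by Eq.~\eqref{eq:su2UalphaMatrix}, where each matrix element is a product of a half-angle trigonometric function of $\theta$ with the phases $e^{\pm i(\phi\pm\psi)/2}$. Then, for each $a\in\{1,2,3\}$, I would apply the differential operator $\dummy{L}_a$ componentwise to this matrix and check the identity
\begin{equation}
\dummy{L}_a \mathcal{U}(\vec\alpha) \;=\; -\tau_a\,\mathcal{U}(\vec\alpha)\,.
\end{equation}
The case $a=3$ is immediate: $\dummy{L}_3 = -i\,\partial_\phi$ picks out the eigenvalues $\mp 1/2$ in the exponentials $e^{\mp i(\phi\pm\psi)/2}$, precisely reproducing the action of $-\tau_3=-\sigma_3/2$. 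For $a=1,2$, the $\theta$- and $\phi$-derivatives in Eqs.~\eqref{eq:L1.diff.op.continuum}--\eqref{eq:L2.diff.op.continuum} mix the rows of $\mathcal{U}$ through the half-angle identities $\partial_\theta \cos(\theta/2) = -\tfrac12\sin(\theta/2)$, $\partial_\theta \sin(\theta/2) = \tfrac12\cos(\theta/2)$, combined with the $\cot\theta$ and $1/\sin\theta$ prefactors; after collecting the phases $e^{\pm i\phi}\cos\phi$ and $e^{\pm i\phi}\sin\phi$ in terms of $e^{\pm i(\phi\pm\psi)/2}$, the result is precisely the action of $-\tau_1=-\sigma_1/2$ and $-\tau_2=-\sigma_2/2$ respectively.

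Once this identity is established, Lemma~\eqref{lem-XPCommutatorDiffOperator} applies with $P=L_a$, $X=U$ and symbol $\mathcal{P}(\mathcal{U})=\dummy{L}_a\mathcal{U}=-\tau_a\mathcal{U}$, yielding
\begin{equation}
[L_a, U]\;=\;-\tau_a\,U\,,
\end{equation}
which is the first relation of Eq.~\eqref{eq:CanonicalCommRel}. The analogous statement for $R_a$ follows from the parallel construction (right-invariant Lie derivatives on $S_3$, whose explicit form differs only in the sign conventions for the Euler-angle derivatives), together with Eq.~\eqref{eq:RandLparallelTransport} that relates $R_a$ to $L_a$ via conjugation by $U$.

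The main obstacle is not conceptual but computational: correctly tracking the half-angle factors and the mixing of rows/columns through the $\cot\theta$ and $1/\sin\theta$ terms in Eqs.~\eqref{eq:L1.diff.op.continuum}--\eqref{eq:L2.diff.op.continuum}, where apparent singularities at $\theta=0,\pi$ (the coordinate singularities of the Euler chart) cancel after grouping terms. A clean way to organize the bookkeeping is to write $\mathcal{U}$ in the compact form $e^{-i\phi\tau_3}e^{-i\theta\tau_2}e^{-i\psi\tau_3}$ and use $\partial_\phi \mathcal{U}=-i\tau_3 \mathcal{U}$ together with the conjugation identity $e^{-i\phi\tau_3}\tau_{1,2}e^{i\phi\tau_3} = \cos\phi\,\tau_{1,2}\mp\sin\phi\,\tau_{2,1}$, which reproduces exactly the angular coefficients appearing in Eqs.~\eqref{eq:L1.diff.op.continuum}--\eqref{eq:L2.diff.op.continuum} and makes the cancellation of the singular prefactors manifest.
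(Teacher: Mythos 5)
Your proof is correct, and its skeleton matches the paper's: both reduce the claim to lemma~\eqref{lem-XPCommutatorDiffOperator} plus the identity $L_a\,\mathcal{U}(\vec\alpha) = -\tau_a\,\mathcal{U}(\vec\alpha)$ applied entrywise. Where you diverge is in how that identity is established. The paper performs no explicit differentiation at all: it recognizes the four entries of $\mathcal{U}(\vec\alpha)$ in Eq.~\eqref{eq:su2UalphaMatrix} as Wigner functions $\pm D^{1/2}_{m_L m_R}$ and simply invokes the known actions \Crefrange{eq:su2L3convention}{eq:su2Rpmconvention} of $L_3$ and $L_\pm$ on the $j=1/2$ irrep, so the matrices $-\tau_3\,\mathcal{U}$, $-\tau_\pm\,\mathcal{U}$ drop out of the ladder-operator algebra. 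You instead verify the identity by brute-force coordinate differentiation, organized via $\mathcal{U} = e^{-i\phi\tau_3}e^{-i\theta\tau_2}e^{-i\psi\tau_3}$ and conjugation identities; this is more laborious but also more self-contained, since it does not presuppose that the concrete operators in Eqs.~\eqref{eq:L1.diff.op.continuum}--\eqref{eq:L3.diff.op.continuum} realize the abstract irrep actions — indeed it proves exactly that (be warned of a sign slip in your sketch: with $[\tau_3,\tau_1]=i\tau_2$ one gets $e^{-i\phi\tau_3}\tau_1 e^{i\phi\tau_3}=\cos\phi\,\tau_1+\sin\phi\,\tau_2$, the opposite of what you wrote; the bookkeeping still closes once corrected). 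The one place where you should be more careful is the $R_a$ case. The paper just repeats the analogous computation with the conventions of \Cref{eq:su2R3convention,eq:su2Rpmconvention}, and your first suggestion (the mirrored right-invariant vector fields) amounts to the same. Your second suggestion — deriving $[R_a, U] = U\tau_a$ from $[L_a,U]=-\tau_a U$ via Eq.~\eqref{eq:RandLparallelTransport} — does work, but it is not immediate as stated: since the coefficients $U^{(1)}_{ab}$ are themselves multiplication operators one must use that the components of $U$ commute among themselves, and untangling the color indices requires the $\mathrm{su}(2)$ completeness (Fierz) identity for the $\tau_a$ together with unitarity of $U$ and tracelessness of $\tau_a$. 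With those ingredients spelled out, that route is a legitimate alternative that buys you the $R_a$ relations without redoing any differential-operator computation.
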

  
  \begin{proof}

  The proof follows immediately from lemma~\eqref{lem-XPCommutatorDiffOperator} by using the expression of $U$ 
  (see. Eq.~\eqref{eq:su2UalphaMatrix}) and the properties of the momenta (see \Crefrange{eq:su2IrrepsOperators}{eq:su2Rpmconvention}). 
  In the following steps $\psi(U)$ is a generic wavefunctional of the group manifold elements.
  
  The commutator $[L_3, U]$ acts as: 
  \begin{equation}
    \label{eq:L3UCommutatorDicreteCase}
    \begin{split}
    [L_3, U] \psi(U) &=
    [L_3 \, \mathcal{U}(\vec{\alpha})] \psi(U) =
     \begin{pmatrix}
      -\frac{1}{2} D^{1/2}_{-1/2, -1/2}(\vec{\alpha}) &
      +\frac{1}{2} D^{1/2}_{-1/2, +1/2}(\vec{\alpha}) \\
      -\frac{1}{2} D^{1/2}_{+1/2, -1/2}(\vec{\alpha}) &
      +\frac{1}{2} D^{1/2}_{+1/2, +1/2}(\vec{\alpha})
     \end{pmatrix}
     \psi(U) \\
     &= -\tau_3 \, \mathcal{U}(\vec{\alpha}) \psi(U) =
     - \tau_3 U \psi(U)
     \, .
     \end{split}
    \end{equation}
  
  For $L_1$ and $L_2$ we can equivalently prove the commutation
  relations by looking at ${L_\pm = L_1 \pm i L_2}$: \begin{equation}
    \label{eq:LplusUCommutatorDiscreteCase}
    \begin{split}
    [L_+, U] \psi(U) 
    &= 
    [L_+ \, \mathcal{U}(\vec{\alpha})] \psi(U)
     =
     \begin{pmatrix}
      D^{1/2}_{1/2, 1/2}(\vec{\alpha}) &
      -D^{1/2}_{1/2, -1/2}(\vec{\alpha}) \\
      0 & 0
     \end{pmatrix}
     \psi(U) \\ 
     &= -\tau_{+} \, \mathcal{U}(\vec{\alpha}) \psi(U) =
     - \tau_{+} U \psi(U)
     \, ,
     \end{split}
    \end{equation}
    and \begin{align}
    \label{eq:LminusUCommutatorDiscreteCase}
    \begin{split}
    [L_-, U] \psi(U) 
    &= 
    [L_- \, \mathcal{U}(\vec{\alpha})] \psi(U)
     =
     \begin{pmatrix}
      0 & 0 \\
      -D^{1/2}_{-1/2, 1/2}(\vec{\alpha}) &
      D^{1/2}_{-1/2, -1/2}(\vec{\alpha})
     \end{pmatrix}
     \psi(U) \\ 
     &= 
     - \tau_{-} \, \mathcal{U}(\vec{\alpha}) \psi(U) =
     -\tau_{-} U \psi(U)
     \, ,
     \end{split}
    \end{align} where $\tau_\pm = \tau_1 \pm i \tau_2$. It follows that
  $[L_{1,2}, U] = -\tau_{1,2} U$.
  
  The proof for the $R_a$ are analogous. We remark that in the latter
  case one must use the convention of \Cref{eq:su2R3convention,eq:su2Rpmconvention} in order to get Eq.~\eqref{eq:CanonicalCommRel}.
  
  \end{proof}

\section{Finite differences on $S_3$}
\label{sec-FiniteDifferences}

In this section we provide an alternative construction of the
finite-dimensional canonical momenta. This is done constructing finite
difference operators such that the maximum number of continuum
eigenstates is reproduced. 
% The consequences of theorem~\eqref{thm-SU2RectMatrix} show up also here, 
% though in a sub-optimal fashion.

In the continuum manifold, the canonical momenta are represented by the
Lie derivatives along the generators of the Lie algebra (see App.~\eqref{sec-GeometrysuNc} for a derivation). 
Finding a finite dimensional
representation of the $\dummy{L}_a$ means approximating the latter by
finite differences operators. The convergence to continuous differential
operators is a long standing problem in mathematics (see e.g. Ref.~\cite{9d7cc18a-d051-3230-be5c-06598cab3b15}),
and in general depends on the set of functions the operators act on. In
our case we focus on the dicrete versions of the continuum
eigenfunctions, as our physical state will be a linear combination of
the latter.

The finite differences are not unique. One can use Taylor
approximations, restrict the space of functions, etc. In Ref.~\cite{Jakobs:2023lpp} for instance, we
constructed the momenta using a Delaunay triangulation of $S_3$. An
interesting choice is to consider operators preserving the spectrum on a
subspace of functions. Some examples are Ref.~\cite{Calogero:1982xa} for polynomials and
Ref.~\cite{Torres-Vega17} for exponential functions. 
Here we are interested in applying the results of Ref.~\cite{Calogero:1984ew,Calogero1985}
for trigonometric polynomials, generalizing the application to orbital
angular momentum of Ref.~\cite{campos2000finite}.

Let $f(x)$ be a $\tau$-periodic trigonometric polynomial sampled at
$N$ points ${x_1,\ldots,x_N}$ in the interval $(0, \tau)$. The
vector $\vec{f}$, with components
${f_i = f(x_i)}$ (${i=1,\ldots,N}$), is the discrete version of the
function. If the degree of $f(x)$ is at most ${n = \frac{N-1}{2}}$,
we can write a finite-difference operator $D_x$ that gives exact
derivatives on $\vec{f}$ (see e.g.~Sec.~2.3 of Ref.~\cite{campos2019xft}):
\begin{equation}
  \label{eq:DxFiniteDiff}
  (D_x)_{ij} = \begin{cases}
    \frac{1}{2} \sum_{\ell \neq i}^{N} \cot\frac{(x_i - x_\ell)}{2}, & \text{if } i = j, \\
    \frac{1}{2} \frac{s'(x_i)}{s'(x_j)} \csc\frac{(x_i - x_j)}{2}, & \text{if } i \neq j \, ,
  \end{cases}
\end{equation}
where
${ s(x) = \prod_{\ell=1}^{N} \sin{\frac{x - x_\ell}{2}} }$. For
anti-periodic functions the derivatives are still exact by setting
$f_1 = 0$ and hold for degrees up to ${n = \frac{N-2}{2}}$.
Physically this is a valid dicretization if the points $x_i$ become
asymptotically dense with $N$, because then
$\lim_{N \to \infty} x_1 = 0$ and $f(0) = 0$ due to
anti-periodicity. Functions $g(x)$ are represented by diagonal
matrices with elements ${G_{ij} = g(x_i) \delta_{ij}}$. We also remark
that higher derivatives are obtained by higher powers of $D_x$, and
multi-variable differential operators are found as tensor products of
the respective operators~\cite{campos2000finite}.

\begin{proposition} \label{prp-LaTrigPolyFiniteDiff}

Let's consider the finite-dimensional operators obtained from Eqs.~\eqref{eq:L1.diff.op.continuum},~\eqref{eq:L2.diff.op.continuum},
\eqref{eq:L3.diff.op.continuum} and~\eqref{eq:Lsquared.diff.op.continuum} 
by the replacements
$\frac{\partial}{\partial \alpha_i} \to D_{\alpha_i}$, where
$\vec{\alpha} = (\theta, \phi, \psi)$ and $D_{\alpha_i}$ being the
analog of Eq.~\eqref{eq:DxFiniteDiff}. 
On $S_3$, if ${N_\theta \geq 2q+1}$
and $N_\phi$, $N_\psi$ satisfy the constraints of theorem
(\ref{thm-DJTUnitarity}), these are exact derivatives of the Wigner
D-functions $D^j_{m_L, m_R}$ with $j \leq q$.

The same applies to the $R_a$.

\end{proposition}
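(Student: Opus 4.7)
The plan is to exploit two ingredients: the sampled Wigner $D$-function is a tensor product of three single-variable vectors, and the spectral derivative $D_x$ of Eq.~\eqref{eq:DxFiniteDiff} gives \emph{exact} derivatives of trigonometric polynomials of degree up to $(N-1)/2$. With these in hand, the proof of the proposition reduces to a factor-wise verification.

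First, I would rewrite the discretization of each term in Eqs.~\eqref{eq:L1.diff.op.continuum}--\eqref{eq:Lsquared.diff.op.continuum} in the canonical form: a diagonal multiplication by the coefficient function $a(\vec\alpha)$ evaluated at the sample points, composed with a single-direction finite-difference block $D_{\alpha_i}$ (or a product $D_{\alpha_i}D_{\alpha_j}$ for $L^2$) extended to the full discretised Hilbert space by the Kronecker product with identities in the remaining directions. Applied to the sampled $D$-function, whose components factorise as $e^{im_L\phi_b}\,d^{j}_{m_L m_R}(\theta_a)\,e^{im_R\psi_c}$, each block $D_{\alpha_i}$ acts only on its own factor, and the diagonal multiplication is pointwise evaluation. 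Exactness therefore reduces to three independent single-variable claims.

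Second, I would verify exactness one variable at a time. For $\phi$ and $\psi$ the argument is transparent: in the $4\pi$-period Fourier basis $\{e^{ik\phi/2}\}_{k\in\mathbb{Z}}$ the functions $e^{im_L\phi}$ and $e^{im_R\psi}$ have highest frequencies $|k|=2|m_{L,R}|\le 2q$, and the Nyquist bound $|k|\le(N-1)/2$ is implied by $N_\phi,N_\psi>4q$ (the constraints of theorem~\eqref{thm-DJTUnitarity}), so $D_\phi$ and $D_\psi$ act exactly as $im_L$ and $im_R$ on the sample. For $\theta$ the key technical input is that $d^{j}_{m_L m_R}(\theta)$ is a trigonometric polynomial in $\theta$ of degree at most $j\le q$: from Eq.~\eqref{eq:WignerdFunctionsDef} the Jacobi factor is a polynomial of degree $j-m_R$ in $\cos\theta$, while the half-angle prefactor, once rewritten via $1\pm\cos\theta=2\cos^2(\theta/2),\,2\sin^2(\theta/2)$, supplies the missing powers without pushing the total degree above $j$. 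With $N_\theta\ge 2q+1$ this lies inside the exactness window of $D_\theta$, independently of the non-uniform spacing of the $\theta_a$.

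Third, I would assemble the pieces. The sum of all discretised terms of $L_a$ applied to the sampled $D$-function reproduces the pointwise values of the continuum expression $L_a D^{j}_{m_L m_R}$ at the $\{\vec\alpha_k\}$; by Eqs.~\eqref{eq:su2L3convention}--\eqref{eq:su2Lpmconvention} this continuum result is a combination of $D^{j}$-functions of the same $j$, whose sampled representation is built in. For $L^2$ the same reasoning applies using $D_{\alpha_i}^2$, which is exact on trigonometric polynomials of the same degree, together with the cross term $D_\phi D_\psi$, whose two factors commute because they act on disjoint Kronecker blocks; the eigenvalue $j(j+1)$ then emerges automatically. The $R_a$ case is structurally identical. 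I expect the main obstacle to be the degree counting in $\theta$, in particular handling the half-integer $j$ sector carefully to identify the correct period and confirm that the bound $(N_\theta-1)/2\ge q$ is enough even though the underlying $d$-functions involve $\theta/2$ arguments.
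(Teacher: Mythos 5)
Your overall architecture --- factorize the sampled $D$-function as a Kronecker product of three single-variable vectors, reduce exactness to one claim per direction, and check each claim against the exactness window of the spectral matrix $D_x$ --- is precisely the paper's strategy, and your $\phi$, $\psi$ analysis (frequencies $2|m_{L,R}|\leq 2q$ in the half-angle variable, hence $N_\phi, N_\psi \geq 4q+1$) coincides with the paper's. The gap is in the $\theta$ direction, exactly where you flag your own uncertainty: the claim that $d^{j}_{m_L m_R}(\theta)$ is a trigonometric polynomial in $\theta$ of degree at most $j$ holds only for integer $j$. For half-integer $j$, $2m_L$ is an odd integer, so exactly one of $m_L+m_R$ and $m_L-m_R$ is odd; the half-angle prefactor in Eq.~\eqref{eq:WignerdFunctionsDef} then contributes an \emph{odd} total power of $\sin(\theta/2)$, $\cos(\theta/2)$, and $d^{j}$ is anti-periodic under $\theta \to \theta + 2\pi$ --- e.g.\ $d^{1/2}_{1/2,1/2}(\theta)=\cos(\theta/2)$, which is not a trigonometric polynomial in $\theta$ of any degree. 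Your rewriting $1\pm\cos\theta = 2\cos^2(\theta/2),\, 2\sin^2(\theta/2)$ supplies only even half-angle powers and cannot repair this. For $\mathrm{SU}(2)$ the half-integer sector is not a corner case: $j=1/2$ carries the link matrix elements themselves, so as written your argument establishes the proposition only on the integer-$j$ subspace.

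The paper closes this by counting degrees uniformly in the variable $\theta/2$: every $d^{j}$ with $j\leq q$ is a trigonometric polynomial in $\theta/2$ of degree at most $2q$, the integer-$j$ functions forming the periodic sector and the half-integer ones the anti-periodic sector, for which the variant of Eq.~\eqref{eq:DxFiniteDiff} stated just before the proposition (setting $f_1=0$, exact up to degree $(N-2)/2$) applies; the reduction from a full $\theta/2$-period to samples in $\theta\in(0,\pi)$ then rests on the fact that $D^{j}_{m_L m_R}$ is fully determined by its values there, via the reflection symmetries of the $d$-functions. To complete your proof you would need to (i) split the truncated space into the two periodicity sectors, (ii) invoke the anti-periodic differentiation matrix on the half-integer sector, and (iii) redo the Nyquist count in $\theta/2$ rather than $\theta$. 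The stated bound $N_\theta \geq 2q+1$ is then recovered, but through this restriction-and-symmetry argument, not through the degree-$j$-in-$\theta$ claim; everything else in your proposal (the diagonal coefficient matrices, the exactness of $D_{\alpha_i}^2$ and of the cross term $D_\phi D_\psi$ on disjoint Kronecker factors, and the identical treatment of the $R_a$) matches the paper's reasoning.
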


\begin{proof}

Analogously to the case of orbital angular momentum~\cite{campos2000finite}, the form of the
continuum eigenfunctions determines the number of eigenstates we
reproduce exactly. This is found by determining the minimum number of points
given the maximum degree of the trigonometric polynomial, and
translating it to the original interval.

The $D^j_{m_L m_R}$ are trigonometric polynomials in $\theta/2$ of maximum degree $2q$ (see Eq.~\eqref{eq:WignerdFunctionsDef}). 
Therefore, if $j \leq q$, the
derivatives are exact by considering at least $2q + 1$
points.
Equivalently, the same number points can be taken directly in $(0,\pi)$, 
as $D^j_{m_L m_R}$ is fully determined by the values in the interval $(0, \pi/2)$.

We now come to the differentiation with respect to $\phi$ and
$\psi$. The eigenfunctions $D^j_{m_L, m_R}$ (see Eq.~\eqref{eq:WignerDFunctionsDefinition}) have now become tensor product
vectors. Therefore we need to ensure $N_\phi, N_\psi \geq N_\theta$ in
order to keep the $\theta$-derivatives exact. 
We see that this condition is fulfilled, as the exponentials
$e^{i m_L \phi}$, $e^{i m_R \psi}$ are $2\pi$-periodic in $\phi/2$
and $\psi/2$, with maximum degree $2j$. 
Therefore, the minimum number of points is $4q + 1$ for
$\phi/2, \psi/2 \in (0, 2\pi)$.

\end{proof}

We remark that the number of points $N_\theta$ for the $\theta$ direction is higher than the DJT, 
though sharing the same feature of the exact behavior on a subspace.
%This doesn't contradict theorem~\eqref{thm-DJTUnitarity}, as this gives only a lower bound for the needed number of points on $S_3$, for which the DJT construction is optimal.
By the same arguments used in Sec.~\eqref{sec-su2MatrixOperators} it
follows that prop.~\eqref{prp-FirstNqEigenstates} and
(\ref{prp-FirstNqprimeCommRel}) hold also for the $L_a$ constructed as
above.

The generalization of this method to $\mathrm{SU}(3)$ is obtained by replacing the partial
derivatives of the continuum manifold \cite{byrd1997geometry} with the finite difference
operator of Eq.~\eqref{eq:DxFiniteDiff}.

% \section{$U(1)$ theory}
% \label{sec-u1Theory}
% \input{./u1_theory.tex}

\section{Properties for the DJT construction of the momenta}

We report here some results used in the construction of our canonical momenta matrix representation with the DJT 
(cf.~Sec.~\eqref{sec-su2MatrixOperators}).

\begin{lemma}
\label{lem-wsDistribution}

Let $P_n(x)$ be the $n$-th Legendre Polynomial, with roots
$x_s = \cos{(\theta_s)}$. As $n \to \infty$, the $\theta_s$ become
evenly spaced in the interval $[0, \pi]$, and the Gaussian weights
$w_s \sim \frac{\pi}{n} \sin(\theta_s)$.

\end{lemma}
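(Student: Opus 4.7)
The plan is to derive both asymptotics from a single classical input: the Hilb-type (or WKB) asymptotic expansion
\[
P_n(\cos\theta)\ =\ \sqrt{\frac{2}{\pi n\sin\theta}}\,\cos\!\Bigl[(n+\tfrac{1}{2})\theta-\tfrac{\pi}{4}\Bigr]\ +\ O(n^{-3/2})\,,
\]
valid uniformly on any compact subinterval of $(0,\pi)$. This is the standard Liouville-Green analysis of the Legendre ODE $(1-x^2)y''-2xy'+n(n+1)y=0$ after the substitution $x=\cos\theta$, which puts the equation in Schrödinger form with an effective ``momentum'' $n+1/2$ at leading order. I would quote this result rather than rederive it.

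First, I would locate the zeros. From the asymptotic formula, $P_n(\cos\theta_s)=0$ forces the cosine factor to vanish, hence
\[
(n+\tfrac{1}{2})\,\theta_s-\tfrac{\pi}{4}\ \approx\ (s-\tfrac{1}{2})\pi\,,\qquad s=1,\dots,n,
\]
which gives $\theta_s\approx\frac{(s-1/4)\pi}{n+1/2}$. This immediately shows the spacing between consecutive roots is $\Delta\theta_s\to\pi/n$, i.e.\ the $\theta_s$ are asymptotically equidistributed in $[0,\pi]$. The only subtlety is that the Hilb expansion degrades near $\theta=0,\pi$, but the number of zeros escaping any fixed compact subinterval is bounded, so the equidistribution statement is not affected.

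Next, for the weights I would use the standard Gauss-Legendre identity
\[
w_s\ =\ \frac{2}{(1-x_s^2)\,[P_n'(x_s)]^2}\,.
\]
Differentiating the Hilb formula with respect to $\theta$ (the $1/\sqrt{\sin\theta}$ prefactor contributes only a lower-order term because $\cos[\dots]=0$ at the zero, so only the derivative hitting the cosine survives at leading order) yields
\[
\frac{dP_n}{d\theta}\Big|_{\theta_s}\ \sim\ \mp\,(n+\tfrac{1}{2})\,\sqrt{\frac{2}{\pi n\sin\theta_s}}\,,
\]
and then $P_n'(x_s)=-(\sin\theta_s)^{-1}(dP_n/d\theta)|_{\theta_s}$. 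Squaring and substituting, together with $1-x_s^2=\sin^2\theta_s$, gives
\[
w_s\ \sim\ \frac{2}{\sin^2\theta_s}\cdot\frac{\pi n \sin\theta_s\,\sin^2\theta_s}{2(n+\tfrac{1}{2})^2}\ =\ \frac{\pi n\,\sin\theta_s}{(n+\tfrac{1}{2})^2}\ \sim\ \frac{\pi}{n}\sin\theta_s\,,
\]
as claimed.

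The main obstacle is really just bookkeeping with the Hilb expansion: one needs to be careful that the error terms in the asymptotic for $P_n$ translate into genuine error terms for the zeros and for $P_n'$ at those zeros (implicit function theorem-type argument), and that uniformity holds on intervals $[\epsilon,\pi-\epsilon]$ for any fixed $\epsilon>0$. Since the lemma is used only as a plausibility/identification statement for the $q\to\infty$ limit of the DJT columns in proposition~\ref{prp-FirstNqEigenstates}, a pointwise statement away from the endpoints is sufficient, and I would not try to track uniform error bounds all the way to $\theta=0,\pi$.
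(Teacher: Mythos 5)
Your proof is correct, and for the weights it takes a genuinely different route from the paper's. On the zeros the two arguments essentially coincide: the paper quotes Lether's asymptotic $x_s = \cos\bigl(\pi\frac{s-1/4}{n+1/2}\bigr)\bigl(1+O(n^{-2})\bigr)$, which is precisely what you recover from the Hilb expansion, and both give $\theta_s\sim \pi s/n$ and hence equidistribution. For the weights, however, the paper argues indirectly: since Gaussian quadrature reproduces $\int_{-1}^{1}f$ and, after $x=\cos\theta$, the nodes asymptotically coincide with those of the rectangle rule on $[0,\pi]$, the weights are matched against the rectangle-rule weights, $w_s\sim\Delta\theta\,\sin\theta_s$. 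Strictly speaking this consistency argument establishes the asymptotics only in an averaged sense (tested against integrable $f$); upgrading it to the pointwise statement of the lemma would need, e.g., localized test functions together with positivity of the $w_s$ from theorem~\eqref{thm-PolynomialWeights}. Your route via the exact identity $w_s = 2/\bigl[(1-x_s^2)\,[P_n'(x_s)]^2\bigr]$ plus the differentiated Hilb formula yields the pointwise asymptotic $w_s\sim \pi n\sin\theta_s/(n+\tfrac12)^2$ directly, uniformly on compact subsets of $(0,\pi)$, which is a stronger and cleaner conclusion. The residual care you flag is real but minor: differentiating an asymptotic expansion needs justification (either quote the standard companion asymptotic for $P_n'$, or control the derivative of the error term through the Legendre ODE), and $\theta_s$ is a zero of $P_n$ rather than of the leading cosine, so the term where the derivative hits the $1/\sqrt{\sin\theta}$ prefactor is suppressed because the cosine is $O(n^{-1/2})$-small at $\theta_s$ rather than identically zero. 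Either approach suffices for the lemma's role in prop.~\eqref{prp-FirstNqEigenstates}; yours buys pointwise rigor at the cost of importing the Hilb expansion, while the paper's stays elementary at the cost of being a plausibility-level matching argument.
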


\begin{proof}
In order to prove this, we show that as $n \to \infty$ the roots
$x_s$ of $P_n(x)$ converge to the evenly spaced points used in the
rectangle rule of integration with $n$ points:
\begin{equation}
  \sum_i \Delta z_i\, g(z) \to \int_a^b \mathrm{d}z\, g(z) \, , \qquad \Delta z = (b-a)/n \, .
\end{equation}
$g$ is a generic integrable function in the interval
$[a,b]$. The asymptotic behavior of the $x_s$ is~\cite{LETHER1995245}: 
\begin{equation}
  x_s = 
  \left( 1 - \frac{1}{8 n^2} +  \frac{1}{8 n^3} \right) 
  \cdot 
  \cos{\left(\pi \frac{s - 1/4}{n + 1/2}\right)} 
  + O\left( \frac{1}{n^4} \right) 
  \, ,
\end{equation}
namely:
\begin{equation}
  x_s = \cos{\theta_s} =
  \cos{\left(\pi \frac{s}{n}\right)}
  \left[ 1 + O\left(\frac{1}{n}\right) \right] 
  \, .
\end{equation}
Therefore as $n \to \infty$,
$\theta_s \sim \pi \frac{s}{n}$. Using this result, in the same limit
we get:
\begin{equation}
\sum_s w_s\, f(x_s) \sim 
\int_{-1}^{1} \mathrm{d}x\, f(x) =
\int_{0}^{\pi} \mathrm{d}\theta\, \sin{(\theta)}\, f(\cos{(\theta)}) \sim
\sum_s \Delta \theta\, \sin(\theta_s)\, f(\cos{(\theta_s)})
\, ,
\end{equation}
which gives
$w_s \sim \Delta \theta \sin{\theta_s} \sim \Delta x$.

\end{proof}

\begin{lemma}
  \label{lem-FirstNqprimeAsContinuum}
  
  Let $\vec{w}^j_{m_L m_R}$ be the vectors made by the values of the
  Wigner D-functions $D^{j}_{m_L m_R}$ at the points of the partitioning
  of the sphere $S_3$. Then, the action of the matrices in Eqs.~\eqref{eq:LaSimilarityTransform},\eqref{eq:RaSimilarityTransform} on the
  first $N_{q'}=N_{q-1/2}$ eigenstates of ${\sum_a L_a L_a}$ is the
  same as their differential operators in the continuum manifold.
  
  \end{lemma}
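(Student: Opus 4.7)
The plan is to use the similarity transform $L_a = V\hat{L}_a V^\dagger$ together with Proposition \ref{prp-FirstNqEigenstates}, which identifies the columns $\vec{v}^j_{m_L m_R}$ of $V$ (for $j\leq q$) as the first $N_q$ joint eigenvectors of $\sum_a L_a^2$, $L_3$ and $R_3$ with eigenvalues $j(j+1)$, $m_L$ and $-m_R$, and shows that they coincide asymptotically with the naive Wigner discretizations $\vec{w}^j_{m_L m_R}$ up to the DJT normalization.

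First I would evaluate the discrete action of the momenta on these eigenstates. Because $V$ has orthonormal columns (Theorem \ref{thm-DJTUnitarity}), $V^\dagger \vec{v}^j_{m_L m_R}$ is exactly the electric-basis unit vector $|j, m_L, m_R\rangle$, so $L_a \vec{v}^j_{m_L m_R} = V\,\hat{L}_a\,|j, m_L, m_R\rangle$. The matrix $\hat{L}_a$ is by construction the standard angular-momentum operator in the electric basis, acting via \Crefrange{eq:su2L3convention}{eq:su2Lpmconvention} and producing linear combinations of $|j, m_L', m_R\rangle$ at fixed $j$. Applying $V$ returns linear combinations of the eigenvectors $\vec{v}^j_{m_L', m_R}$ with exactly the standard angular-momentum coefficients. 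The same reasoning with $\hat{R}_a$ handles the right momenta with $m_L \leftrightarrow m_R$.

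Second I would match this with the continuum side. By Proposition \ref{prp-LieDerivativesFaithfulRep} the continuum canonical momenta are realised as the Lie derivatives $\mathcal{L}_a$, and the relations \Crefrange{eq:su2IrrepsOperators}{eq:su2Rpmconvention} are precisely the matrix elements of those differential operators in the Wigner basis, i.e.\ the very same coefficients produced above. Combining Proposition \ref{prp-FirstNqEigenstates} with Lemma \ref{lem-wsDistribution} -- which shows that as $q\to\infty$ the Gaussian weights $w_s$ become the discrete image $(\pi/n)\sin\theta_s$ of the Haar-measure Jacobian $\sin\theta$ on $S_3$ -- the DJT eigenvectors $\vec{v}^j_{m_L m_R}$ are recognised (up to the factor $(j+1/2)^{1/2}\sqrt{w_s/(N_\phi N_\psi)}$, which absorbs exactly this Jacobian) as the samplings of the continuum Wigner functions $D^j_{m_L m_R}$, namely $\vec{w}^j_{m_L m_R}$. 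Since the continuum and discrete coefficients agree, the discrete action coincides with the continuum differential action on the claimed subspace.

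The main obstacle will be controlling the identification between $\vec{v}^j$ and $\vec{w}^j$ at finite $q$: strictly they differ by the position-dependent DJT prefactor, and a clean identification only holds inside the subspace where the DJT resolves the continuum eigenfunctions faithfully, which requires checking that the relevant orthogonality sums of Lemma \ref{lem-LemmaDJT} close exactly for Wigner-function products of the degrees involved. The reduction from the naive cutoff $j\leq q$ to $j\leq q' = q-1/2$ is precisely the half-unit buffer that keeps this faithful subspace closed when the operator $U$, which couples $j$ with $j\pm 1/2$, is applied to these vectors in the downstream commutator $[L_a, U]$ used in Proposition \ref{prp-FirstNqprimeCommRel}.
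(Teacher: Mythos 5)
Your proposal assembles the right ingredients but misses the two steps that constitute the actual content of the lemma, and it replaces an exact finite-$q$ statement with an asymptotic one. First, the identification of the DJT columns with sampled Wigner functions does not need, and must not use, the $q\to\infty$ limit of prop.~\eqref{prp-FirstNqEigenstates} and lemma~\eqref{lem-wsDistribution}: at \emph{finite} $q$ one has exactly $\vec{v}^{\,j}_{m_L m_R} = \sqrt{(j+1/2)/(N_\phi N_\psi)}\; W\, \vec{w}^{\,j}_{m_L m_R}$, where $W_{ik}=\sqrt{w_{s(i)}}\,\delta_{ik}$ is diagonal and invertible because $w_s>0$ by theorem~\eqref{thm-PolynomialWeights}. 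The paper's proof is built on precisely this exact change of basis (your parenthetical remark that the prefactor ``absorbs the Jacobian'' shows you half-see it); an identification that only holds as $q\to\infty$ could at best prove that the discrete action \emph{converges} to the continuum one, whereas the lemma asserts exact equality at finite truncation --- which is the central claim of the paper. As written, this step of your argument fails to prove the statement.

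Second, what you defer as ``the main obstacle'' is the heart of the proof and cannot be postponed to prop.~\eqref{prp-FirstNqprimeCommRel}, since that proposition's proof cites \emph{this} lemma for exactly that point --- your arrangement would be circular. The missing argument is short: because $U$ is diagonal in the $|\vec\alpha\rangle$ basis (Eq.~\eqref{eq:su2UalphaOperatorMatrix}) it commutes with the diagonal $W$, and the components of $U\vec{w}^{\,j}_{m_L m_R}$ are the samples of the products $D^{1/2}_{\mp 1/2,\pm 1/2}\cdot D^{j}_{m_L m_R}$; the Clebsch--Gordan series bounds these by combinations of $D$-functions of degree $J\le j+1/2\le q$ whenever $j\le q-1/2$. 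On that degree-$\le q$ band the discrete analysis is exact --- lemma~\eqref{lem-LemmaDJT} with $n>2q$ closes the $\theta$-sums for products of total degree $\le 2q$, and $N_\phi,N_\psi\ge 4q+1$ close the phase sums --- so $L_a=V\hat{L}_a\Vinv$ and $R_a$ act on these vectors with exactly the continuum Lie-derivative coefficients, and the claim extends to the whole subspace because the $\vec{w}$'s, being the image of orthonormal columns under the invertible $W^{-1}$, form a basis. Without this step, your first step establishes only the near-tautology $L_a\vec{v}_b = V\hat{L}_a e_b$, which holds for all $j\le q$ and therefore cannot account for the restriction to $N_{q-1/2}$ that the lemma asserts.
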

  
  \begin{proof}
  The components of $\mathcal{U}(\vec{\alpha})$ in Eq.~\eqref{eq:su2UalphaOperatorMatrix} are just a particular case of the
  Wigner $D$-functions, with $j=1/2$. The $N_\alpha$ values are on
  the diagonal (in the Hilbert space).
  
  Let's now consider a $N_\alpha \times N_\alpha$ matrix W with
  components\footnote{We observe that ${\operatorname{DJT} \cdot W}$
  resembles what Ref.~\cite{kostelec2008ffts} calls ``Discrete 
  SO(3) Fourier Transform''.}:
  \begin{equation}
    W_{i,j} = \sqrt{w_{s(i)}} \delta_{ij} \, ,
  \end{equation}
  where $s(i)$ is the $\theta$ index correspoding to
  $\vec{\alpha}_i$. $W$ is invertible because $w_s > 0$ from theorem~\eqref{thm-PolynomialWeights}. 
  We observe that if $\vec{v}^j_{m_L m_R}$
  is the $(j,m_L,m_R)$ column of the $\operatorname{DJT}$ in Eq.~\eqref{eq:DJTMatrix}, the vectors
  $\vec{w}^j_{m_L m_R} = W^{-1} \vec{v}^j_{m_L m_R}$ form a basis for the
  first $N_{q - 1/2}$ $su(2)$ irreps. In fact the
  $\operatorname{DJT}$ of Eq.~\eqref{eq:DJTMatrix} has orthonormal
  columns (see theorem~\eqref{thm-DJTUnitarity}), and the
  $\vec{w}^j_{m_L m_R}$ are obtained by a change of basis transformation
  $W^{-1}$ from the orthonormal basis.
  
  If we consider $U \, \vec{w}^j_{m_L m_R}$, this will be a quadruplet of
  vectors (one for each component of
  $\mathcal{U}(\vec{\alpha}) \vec{w}^j_{m_L m_R}$) whose components are
  proportional to the $N_\alpha$ values of
  $D^{1/2}_{\mp 1/2, \pm 1/2}(\vec{\alpha}) \cdot D^{j}_{m_L m_R}(\vec{\alpha})$.
  The product of two Wigner $D$-functions of degree $j_1$ and $j_2$
  is a linear combination of $D$-functions of degree
  ${J \leq j_1 + j_2}$ (see e.g.~subsec. 4.6.1 of Ref.~\cite{doi:10.1142/0270}). 
  Therefore the action of the matrices $L_a$ and $R_a$ on the first $N_{q-1/2}$
  vectors $\vec{w}^j_{m_L m_R}$ is the same as the continuous manifold.
  Since these vectors form a basis, this is a property of the whole
  subspace of the first $N_{q-1/2}$ eigenstates.
  
  \end{proof}

\end{appendices}

\addcontentsline{toc}{section}{References}

\printbibliography{}

\end{document}